\documentclass[%
 reprint,
groupedaddress,
 amsmath,amssymb,
 aps,
 pra,
]{revtex4-1}
\usepackage[T1]{fontenc}
\usepackage[final]{graphicx}
\usepackage{dcolumn}
\usepackage{amsfonts}
\usepackage{amssymb}
\usepackage{pbox}
\usepackage{amsmath}

\DeclareMathOperator{\tr}{tr}

\DeclareMathOperator{\re}{re}
\DeclareMathOperator{\im}{im}

\DeclareMathOperator{\artanh}{artanh}

\usepackage{caption}
\usepackage{subcaption}
\usepackage{mathtools}
\usepackage{braket}

\usepackage{chngcntr}
\usepackage{multirow}
\usepackage{tabularx}
\usepackage[labelsep=period]{caption}
\date{}
\usepackage{graphicx}
\frenchspacing
\usepackage{enumerate}
\usepackage{calc}
\usepackage{indentfirst}
\setcounter{secnumdepth}{3}
\usepackage{booktabs}
\usepackage{dsfont}
\usepackage{arydshln}
\usepackage{bm}
\usepackage{amsthm}
\usepackage{commath}
\setcounter{tocdepth}{4}
\setcounter{secnumdepth}{4}
\usepackage{xcolor}

\theoremstyle{definition}
\newtheorem{definition}{Definition}
\newtheorem{theorem}{Theorem}
\newtheorem*{theorem*}{Theorem}

\newtheorem*{conjecture*}{Conjecture}

\newtheorem*{corollary*}{Corollary}
\usepackage{natbib}
\setcitestyle{square,numbers}
\usepackage{changepage}

\begin{document}

\title{\textbf{Stabilizing entanglement in two-mode Gaussian states}}
\author{Tomasz Linowski$^{1,2}$, Clemens Gneiting$^3$,  {\L}ukasz Rudnicki$^{1,2}$}

\affiliation{
    $^1$International Centre for Theory of Quantum Technologies, University of Gdansk, 80-308 Gda{\'n}sk, Poland \\
    $^2$Center for Theoretical Physics, Polish Academy of Sciences, 02-668 Warszawa, Poland\\
    $^3$Theoretical Quantum Physics Laboratory, RIKEN Cluster for Pioneering Research, Wako-shi, Saitama 351-0198, Japan
}

\date{\today}

\begin{abstract}
We analyze the stabilizability of entangled two-mode Gaussian states in three benchmark dissipative models: local damping, dissipators engineered to preserve two-mode squeezed states, and cascaded oscillators. In the first two models, we determine principal upper bounds on the stabilizable entanglement, while in the last model, arbitrary amounts of entanglement can be stabilized. All three models exhibit a tradeoff between state entanglement and purity in the entanglement maximizing limit. Our results are derived from the Hamiltonian-independent stabilizability conditions for Gaussian systems. Here, we sharpen these conditions with respect to their applicability.
\end{abstract}

\maketitle

\section{Introduction}\label{section:Introduction}
\addcontentsline{toc}{section}{Introduction}
Among the various non-classical aspects of quantum mechanics, the radically unintuitive way in which  systems can become correlated, a consequence of {quantum entanglement}
, had been a subject of ongoing controversy. 
Today, a century after its discovery, quantum entanglement has emerged as one of the most prolific resources of quantum mechanics and it continues to broaden our understanding of nature, with ideas as speculative as time emerging as an entanglement phenomenon being subject to experimental testing \cite{time_as_entanglement}. More than that, however, quantum entanglement has the potential to revolutionize not just the way we think about the world, but the world itself. In close relation to quantum coherence, it is the core property underlying novel technologies such as superdense coding \cite{superdense_coding}, quantum teleportation \cite{quantum_teleportation}, measurement precision beyond the classical limit \cite{measurement_heisenberg_limit} and others \cite{quantum_information_book,quantum_algorithms}.

What often hinders us from harnessing entanglement is decoherence, i.e., the loss of quantum coherence, which tends to rapidly deteriorate the aforementioned quantum benefits in systems subject to even the mildest forms of interaction with an environment -- which in practice is usually inevitable. In the theory of quantum open systems, the influence of the environment on a system is often modeled by a Lindblad master equation \cite{GKS_original,lindblad_original,ChruscinskiReview}:
\begin{equation} \label{eq:lindblad}
\frac{d\hat{\rho}}{dt} = 
	-\frac{i}{\hbar} \big[\hat{H},\hat{\rho}\big] + \hat{D}(\hat{\rho}),
\end{equation}
where $\hat{H}$ is the system Hamiltonian and the dissipator $\hat{D}(\hat{\rho})$ encodes the effects of interaction with the environment (the detailed structure of the dissipator is explained below).

{Over the years, a variety of methods has been developed to deal with the presence of an environment. For example, the dissipative part of the dynamics can be employed to strengthen the desired features of the system \cite{using_dissipation_1,using_dissipation_2,Lyapunov_stationarity}.} For a fixed dissipator, the Hamiltonian remains as the only resource for {stabilizing} desired system states \cite{using_hamiltonian_1,using_hamiltonian_2}. The task is then to look for an appropriate control Hamiltonian $\hat{H}$, such that, for a given environment $\hat{D}(\hat{\rho})$, a desired state $\hat{\rho}$ becomes {\emph{stationary}}, that is, it is a solution to the Lindblad equation with vanishing left hand side.

A more general, geometric perspective has recently been taken in \cite{stabilizability_geometric, stabilizability_cv_systems}. Instead of on {stationary} states, here the focus lies on {\emph{stabilizable}} states, i.e., states, for which, given an environmental effect $\hat{D}(\hat{\rho})$, there exists an (unspecified) Hamiltonian $\hat{H}$, such that the aforementioned equation holds (in other words, stabilizable states may be regarded as families of {potentially stationary} states).

Here, we apply the theory of stabilizability to {two-mode Gaussian states}, that is, bipartite continuous-variable states with {normally distributed} Wigner function. Gaussian states are among the most generic, yet most useful states both in theoretical and in experimental quantum optics \cite{gaussian_optics}, as well as quantum information \cite{gaussian_information_1,gaussian_information_2,Lyapunov_stationarity,controllabel_generation_of_two_mode_entangled_states}. They include, among others, coherent, squeezed, and thermal states \cite{serafini}. In particular, with regard to the importance of entanglement as a resource, we investigate which entangled states can be stabilized and what is the maximum amount of entanglement admitted within the set of stabilizable states.

{We focus here on two-mode Gaussian states, due to their fundamental importance in many quantum information protocols. The entanglement properties of two-mode Gaussian states and their experimental feasibility are thoroughly studied and well understood \cite{controllabel_generation_of_two_mode_entangled_states,Gaussian_states_in_technologies,two-mode_gaussian_etc_proper_norm,Marian_2001}. While the formalism of stabilizable states is readily applicable to more than two modes \cite{stabilizability_cv_systems}, multipartite scenarios do in general not admit a unified treatment of their entanglement properties \cite{Three_qubits_can_be_entangled_in_two_inequivalent_ways,entangling_power_of_multipartite_gates,geometry_of_quantum_states}.}

We consider the stabilizability of entangled Gaussian states within three paradigmatic dissipative models of two-mode systems: two modes subject to local damping, dissipators engineered to preserve two-mode squeezed thermal states and cascaded oscillators coupled to the vacuum \cite{stabilizability_cv_systems, two_modes_damping}. All three models have found use in the context of quantum technologies, ranging from quantum cryptography and computation \cite{Gaussian_states_in_technologies,OPOs_computation}, to experimental generation of entanglement \cite{OPOs_entanglement,squeezed_dissipation_experimental_1,squeezed_dissipation_experimental_2}, to spectroscopy \cite{OPOs_spectroscopy}, among others. Moreover, these models have been the focus of recent theoretical investigations, see, e.g., \cite{two_modes_damping,Lyapunov_stationarity}. Finally, which is not without importance for our purposes, the models can, to a large extent, be treated analytically, giving deeper insights into the mechanism in question.

In the case of local damping, where the dissipator clearly acts adversary to entanglement, our findings give evidence that the amount of entanglement achievable within the set of stabilizable states is upper bounded by $\log 2$, as quantified by logarithmic negativity. Surprisingly, we find that a similar upper bound also exists for dissipators engineered to preserve two-mode squeezed thermal states, i.e., dissipators which are fundamentally nonlocal. On the other hand, we prove that it is possible to stabilize states that are more entangled than the two-mode squeezed states underlying the engineered dissipator. In the remaining model of the cascaded oscillators, we show that, in principle, arbitrary amount of entanglement can be stabilized. In all three cases we observe that the stabilizable states characterized by the maximum amount of entanglement are close to be maximally mixed, suggesting an asymptotic tradeoff relation between entanglement and purity within the stabilizable states. This is reminiscent of previous findings \cite{stabilizability_geometric} regarding two qubits.

This work is organized as follows: In Section \ref{sec:Preliminaries} we briefly summarize the main characteristics of (two-mode) Gaussian states, along with our chosen measures of entanglement and mixedness. In Section \ref{section:Stabilizability} we rigorously introduce the notion of stabilizability and prove Theorem~\ref{theorem_covariance}, in which we sharpen the necessary conditions for stabilizability of general Gaussian states derived previously \cite{stabilizability_cv_systems}, showing that half of these conditions are always automatically fulfilled. Section \ref{sec:Results} is dedicated to our main results: stabilizability of two-mode entangled states in the three considered environmental models. Finally, in Section \ref{sec:Summary}, we discuss our results {and their limitations, as well as possible generalizations,} and give an outlook for future research.

\section{Gaussian states}\label{sec:Preliminaries}
Let us consider an $N$-mode Hilbert space $\mathcal{H} = \bigotimes_{i=1}^N\mathcal{H}_i$ described by the vector of $N$ pairs of position and momentum operators 
\begin{equation} \label{xi}
\hat{\vec{\xi}} \coloneqq (\hat{x}_1, \hat{p}_1, \ldots, \hat{x}_N, \hat{p}_N)^T.
\end{equation}
The canonical commutation relations
\begin{equation} \label{canonical_commutation_relations}
\big[\hat{x}_j,\hat{p}_k\big]=i\hbar\delta_{jk},
    \qquad \big[\hat{x}_j,\hat{x}_k\big]
    =\big[\hat{p}_j,\hat{p}_k\big]=0,
\end{equation}
can be concisely encoded in the so-called {\emph{symplectic form}}
\begin{equation} \label{symplectic_form}
\begin{split}
J_{jk} \coloneqq -\frac{i}{\hbar}\big[\hat{\xi}_j,\hat{\xi}_{k}\big],
\end{split}
\end{equation}
which explicitly reads
\begin{equation}
J=\bigoplus_{k=1}^N
\begin{bmatrix}
0&1\\
-1&0
\end{bmatrix}.
\end{equation}

Following standard terminology we call {\emph{Gaussian states}} all the states with normal (Gaussian) characteristic functions and quasiprobability distributions \cite{using_dissipation_2,stabilizability_cv_systems,cv_systems_gaussian_states,two-mode_gaussian_etc,two-mode_gaussian_etc_proper_norm}. It follows from this definition that Gaussian states are fully characterized by the first and second moments of the vector $\hat{\vec{\xi}}$. The first moments can be adjusted to have an arbitrary value with local operations, which do not affect global properties of the state such as entanglement or mixedness, and can thus be set to 0. Therefore, from the point of view of this work, any Gaussian state is {fully described} by the set of second moments of the vector $\hat{\vec{\xi}}$, conveniently encoded in the {\emph{covariance matrix}}
\begin{equation} \label{covariance_matrix}
V_{kl}=V_{lk} \coloneqq \frac{1}{2}\braket{\big\{\hat{\xi}_k, \hat{\xi}_{l}\big\}},
\end{equation}
where $\{\cdot,\cdot\}$ denotes the anticommutator. 

In the particular case of two-mode Gaussian states, $N=2$, any valid covariance matrix possesses a simple, unique form, called the {\emph{standard form}} \cite{two-mode_gaussian_etc,two-mode_gaussian_etc_proper_norm}:
\begin{equation} \label{standard}
V_{\textrm{sf}} = 
\begin{bmatrix}
a & 0 & c_+ & 0 \\
0 & a & 0 & c_- \\
c_+ & 0 & b & 0 \\
0 & c_- & 0 & b
\end{bmatrix},
\end{equation}
where the parameters $a, b > 0$ are proportional to the average number of particles / excitations in the two modes and the coefficients $c_\pm \in \mathbb{R}$ contain the information about the correlations between the modes. Any two-mode covariance matrix can be brought into its standard form by means of local symplectic operations, which, similarly to local unitary operations for density matrices, do not change global properties of the state. For this reason, unless stated otherwise, from now on we assume $V$ to be in its standard form.

Note that not all matrices (\ref{standard}) constitute valid covariance matrices of two-mode Gaussian states. For this to be the case, they need to additionally fulfill the Heisenberg uncertainty principle:
\begin{equation} \label{uncertainty_principle}
\sqrt{\braket{\hat{x}_k^2}-\braket{\hat{x}_k}^2}
	\sqrt{\braket{\hat{p}_k^2}-\braket{\hat{p}_k}^2}
	\geqslant \hbar/2,
\end{equation}
where $k\in\{1,2\}$, equivalent to \cite{two-mode_gaussian_etc_proper_norm}
\begin{equation} \label{Heis}
\begin{split}
2 \leqslant 4 \Delta(V) \leqslant 1 + 16 \det V,
\end{split}
\end{equation}
with $\Delta(V) \coloneqq a^2 + b^2 + 2 c_+c_-$ and 
\begin{equation} \label{detV}
\begin{split}
\det V = \big(ab-c_+^2\big)\big(ab-c_-^2\big),\quad
    ab - c_\pm^2 \geqslant 0.
\end{split}
\end{equation}

Since it will become relevant below, we remark that the parametrization of the standard form (\ref{standard}) in terms of $(a,b,c_\pm)$ is not the only valid choice. Of particular significance is also the description in terms of the {\emph{symplectic eigenvalues}} of~$V$:
\begin{equation} \label{symplectic_eigenvalues}
\begin{split}
1/2 \leqslant \nu_- \leqslant \nu_+.
\end{split}
\end{equation}
The symplectic eigenvalues are the eigenvalues of the matrix product $JV$ and read explicitly
\begin{equation} \label{nu_+-}
\begin{split}
\nu_\pm(V) = 
\sqrt{\frac{1}{2}\left(\Delta(V) \pm \sqrt{\Delta^2(V) - 4\det V}\right)}.
\end{split}
\end{equation}
An important subclass of two-mode Gaussian states, that is most easily described in terms of symplectic eigenvalues, consists of {\emph{nonsymmetric two-mode squeezed thermal states}}
\begin{equation}
\begin{split}
\hat{\rho}_{\textnormal{sq}}(\nu_\pm,r) = \hat{S}(r)\hat{\rho}_{\textnormal{th}}(\nu_\pm)\hat{S}^\dag(r),
\end{split}
\end{equation}
which arise from applying the two-mode squeezing operator
\begin{equation} \label{squeezing_operator}
\begin{split}
\hat{S}(r) = e^{\frac{r}{2}
	\left(\hat{a}_1 \hat{a}_2 - \hat{a}_1^\dag \hat{a}_2^\dag \right)},
\end{split}
\end{equation}
where $\hat{a}_k$ is the annihilation operator of the $k$-th mode and $r$ is the squeezing parameter, to the two-mode thermal state $\hat{\rho}_{\textnormal{th}}(\nu_\pm)$. Most importantly, squeezed states of light are used in quantum metrology: as means of enhancing the measurement precision \cite{squeezed_metrology}. For a detailed review, see \cite{squeezed_review}.

It can be shown that the standard form of the covariance matrix for such states reads \cite{two-mode_gaussian_etc}:
\begin{equation} \label{squeezed_parametrization}
\begin{split}
a(\nu_\pm, r) &= \nu_- \cosh^2 r + \nu_+ \sinh^2 r,\\
b(\nu_\pm, r) &= \nu_- \sinh^2 r + \nu_+ \cosh^2 r,\\
c_\pm(\nu_\pm, r) &= \pm \frac{\nu_- + \nu_+}{2}\sinh 2r,
\end{split}
\end{equation}
which we will refer to as the {\emph{squeezed state parametrization}} in the remainder. In fact, every state that fulfills $ c_+ = -c_- $ and $a,b\geqslant 1/2$ can be parametrized using the above recipe. The former requirement is obvious, while the latter arises from the fact that
\begin{equation}
\begin{split}
    a(\nu_\pm, r)\geqslant\nu_- (\cosh^2 r + \sinh^2 r) 
        \geqslant \nu_- \geqslant 1/2,
\end{split}
\end{equation}
and analogously for $b$. It is easy to show that any two-mode squeezed state is physical, that is, fulfills the Heisenberg uncertainty relation (\ref{Heis}).

\subsection{Entanglement measure}
Since we are interested in stabilizing entangled states, we need a way to certify entanglement. For two-mode Gaussian states, a necessary and sufficient separability criterion is given by the extension of the {PPT criterion} \cite{PPT} to continuous variable systems \cite{PPT_cv_systems}. This criterion states that, if the partial transposition of the state with respect to a given bipartition is not positive semi-definite, then the state is entangled with respect to this bipartition.

For two-mode Gaussian states in the covariance matrix representation, partial transposition with respect to the second mode corresponds to a mirror reflection of the second momentum: $p_2 \to - p_2$. This changes the symplectic eigenvalues of the state from (\ref{nu_+-}) to
\begin{equation} \label{nu_+-tilde}
\begin{split}
\tilde{\nu}_\pm(V) = 
	\sqrt{\frac{1}{2}\left(\tilde{\Delta}(V)\pm\sqrt{\tilde{\Delta}^2(V)-4\det V}\right)},
\end{split}
\end{equation}
where $\tilde{\Delta}(V) \coloneqq a^2 + b^2 - 2 c_+ c_-$. The PPT criterion thus reads \cite{two-mode_gaussian_etc_proper_norm}
\begin{equation}
\begin{split}
\tilde{\nu}_-(V) \geqslant 1/2,
\end{split}
\end{equation}
since $\tilde{\nu}_-(V) < 1/2$ would result in an invalid covariance matrix [see eq. (\ref{symplectic_eigenvalues})]. We stress that, in the case of two-mode Gaussian states, the PPT criterion is both necessary and sufficient \cite{PPT_cv_systems}.

We now have a simple way of certifying the presence of entanglement in Gaussian states. However, we still need a way to quantify it. {Several different measures of entanglement of two-mode Gaussian states have been proposed, including entanglement of formation, Bures distance, and Gaussian measures of entanglement \cite{Gaussian_entanglement_measures,Bures_distance}.} In this work, we deploy the {\emph{logarithmic negativity}}, defined as
\begin{equation}
\begin{split}
E_\mathcal{N}(\hat{\rho}) \coloneqq \log\tr\big|\hat{\rho}^{T_2}\big|,
\end{split}
\end{equation}
where $\hat{\rho}^{T_2}$ is the partially transposed state. The logarithmic negativity constitutes an upper bound to the {distillable entanglement} in the state, and it is continuous, convex and monotone under local operations and classical communication as long as the considered state has a finite mean energy. In other words, it is a proper measure of entanglement.

In the case of two-mode Gaussian states, the logarithmic negativity takes a particularly simple form \cite{two-mode_gaussian_etc_proper_norm}:
\begin{equation} \label{log_neg}
\begin{split}
E_\mathcal{N}(V) \coloneqq \max \left\{0,-\log\left[2\tilde{\nu}_-(V)\right]\right\}.
\end{split}
\end{equation}

\subsection{Measures of mixedness}
As shown below, the amount of entanglement in stabilizable states is related to their purity. In order to verify this, in addition to the degree of entanglement of stabilizable states, we also need to characterize their degree of purity. It is known that any pure state's covariance matrix fulfills $a=b$, and $c_+=-c_-=\sqrt{a^2-1/4}$. However, to cover the general case we also need to select measures of mixedness.

The purity of the state is defined as $\mu(\hat{\rho})\coloneqq \tr\hat{\rho}^2$. For our purposes, it is more convenient to consider the degree of mixedness being state's {lack of purity}. One of the most often used measures of mixedness is given by the {\emph{linear entropy}}
\begin{equation} \label{eq:S_L}
\begin{split}
S_L(\hat{\rho})\coloneqq 1 - \mu(\hat{\rho}),
\end{split}
\end{equation}
which is essentially a linearized version of the von Neumann entropy $S_V(\hat{\rho})\coloneqq-\tr\hat{\rho}\log\hat{\rho}$. Both entropies are just special cases of the {Tsallis} \cite{entropies_tsallis} and {R\'{e}nyi entropies} \cite{entropies_renyi}. 

For two-mode Gaussian states~\cite{two-mode_gaussian_etc} we can calculate
\begin{equation} \label{trrho}
\begin{split}
\tr\hat{\rho}^p&=g_p(2\nu_+)g_p(2\nu_-),\\
g_p(x)&\coloneqq2^p[(x+1)^p-(x-1)^p]^{-1}.
\end{split}
\end{equation}
In particular, the linear entropy (\ref{eq:S_L}) reduces to the simple expression
\begin{equation} \label{S_L_simple}
\begin{split}
S_L(V) = 1-\left[4\nu_+(V)\nu_-(V)\right]^{-1}.
\end{split}
\end{equation}
Due to its relative simplicity, throughout the rest of this work, the linear entropy will be our choice for the measure of mixedness. However, we numerically obtain qualitatively similar results for some of the other measures mentioned above.

\section{Stabilizability} 
\label{section:Stabilizability}
We complete our toolbox by introducing the conditions for {stabilizability}. Let us start with general states $\hat{\rho}$ evolving under the GKLS (or Lindblad in short) equation (\ref{eq:lindblad}). The dissipator has the form 
\begin{equation} \label{dissipator}
\begin{split}
\hat{D}\left(\hat{\rho}\right)\coloneqq
	\sum_{k} \left( \hat{L}_k \hat{\rho} \hat{L}_k^\dag
	-\frac{1}{2}\big\{\hat{L}_k^\dag \hat{L}_k,\hat{\rho}\big\}\right),
\end{split}
\end{equation}
where $\hat{L}_k$ are the so-called {\emph{Lindblad operators}}.

In \cite{stabilizability_cv_systems}, the following two definitions were distinguished:

\begin{definition}
A state $\hat{\rho}$ is a \emph{stationary} state of the Lindblad equation (\ref{eq:lindblad}), if $d\hat{\rho}/dt=0$.
\end{definition}

\begin{definition}
A state $\hat{\rho}$ is a \emph{stabilizable} state with respect to the dissipator $\hat{D}(\hat{\rho})$, if there exists a Hamiltonian $\hat{H}$ such that $\hat{\rho}$ is the stationary state of the Lindblad equation (\ref{eq:lindblad}) with this specific Hamiltonian as an input.
\end{definition}

Both definitions are concerned with robustness of the system against the action of the environment. However, while stationarity is formulated with respect to both the Hamiltonian and the dissipator, stabilizability refers only to the latter. Consequently, it follows \cite{stabilizability_geometric} that the set of stabilizable states with respect to the dissipator $\hat{D}(\hat{\rho})$,
\begin{equation}
S_{\hat{D}} \coloneqq \big\{\hat{\rho}:\exists{\hat{H}} \:\:
	0 = -\frac{i}{\hbar} [\hat{H}, \hat{\rho}] + \hat{D}(\hat{\rho})\big\},
\end{equation}
is independent of the Hamiltonian.

We note in passing that, by definition, any stationary state is necessarily stabilizable. Thus, by considering stabilizability, we can make meaningful statements about whether a given state has the potential to be a stationary solution to the Lindblad equation {without the need to specify a Hamiltonian}.

In \cite{stabilizability_geometric}, the following {necessary} conditions for stabilizability of general (finite-dimensional) quantum systems were derived: a state $\hat{\rho}$ is stabilizable, if
\begin{equation} \label{stabilizability_general}
0=\tr\left[\hat{\rho}^{k}\hat{D}(\hat{\rho})\right],
\end{equation}
for $k\in\{1,\ldots,d-1\}$, where $d$ denotes the dimension of the Hilbert space. These conditions are based on the insight that, at stationarity, the Hamiltonian must be able to compensate/neutralize the effect of the dissipator, which implies that the dissipator must not affect the moments of the state, such as the purity.

\subsection{Stabilizability of Gaussian states}

In the context of continuous variable systems, including Gaussian states, the general stabilizability conditions~(\ref{stabilizability_general}) cannot be applied directly, since one must in general check infinitely many conditions. More importantly, however, the general conditions (\ref{stabilizability_general}) leave the Hamiltonian unconstrained. While this allows for considerations of most general nature, in many situations natural constraints limit the range of accessible Hamiltonians. This is especially the case in experiments, which are often, due to technical limitations, restricted to {quadratic Hamiltonians} that are at most quadratic in the creation and annihilation operators. In particular, the structure-preserving evolution of Gaussian states is driven by such quadratic Hamiltonians.

For this reason, a different methodology, incorporating this constraint, has recently been developed \cite{stabilizability_cv_systems}. In the case of quadratic Hamiltonians, i.e., Hamiltonians of the form:
\begin{equation} \label{eq:hamiltonian_quadratic}
\begin{split}
\hat{H} = \hat{\vec{\xi}}^T G \hat{\vec{\xi}},
\end{split}
\end{equation}
where $G$ is a $2N\times 2N$, real, symmetric matrix and $\hat{\vec{\xi}}$ is the vector of mode quadratures defined by Eq. (\ref{xi}), the Lindblad evolution of the covariance matrix (of any state, not necessarily Gaussian) can be concisely written as \cite{using_dissipation_1,using_dissipation_2}
\begin{equation} \label{eq:v_evolution}
\frac{d}{dt}V = AV+VA^T+J (\re C^\dag C)J^T.
\end{equation}
The matrix $A\coloneqq J\left[G+(\im C^\dag C)\right]$ is not symmetric in general, while
\begin{equation} \label{eq:matrix_C}
C_{kl}\coloneqq (\vec{c}_k)_{l}
\end{equation}
is a $2N\times 2N$ matrix resulting from writing the Lindblad operators as $\hat{L}_k=\vec{c}_k \cdot \hat{\vec{\xi}}$ with $\vec{c}_k\in\mathbb{C}^{2N}$. It is assumed that the Lindblad operators are linear in $\hat{x}_k$, $\hat{p}_k$ in order to guarantee consistency with the quadratic nature of the time evolution.

It has been shown \cite{stabilizability_cv_systems} that the necessary conditions for stabilizability of the covariance matrix read
\begin{equation} \label{stabilizability}
0 = 2 \tr \left(I_C J \tilde{V}^k \right) + 
	\tr \left(R_C J \tilde{V}^{k-1} \right),
\end{equation}
where $k\in\{1,\ldots,2N\}$, and we have introduced the~short-hand notation $I_C \coloneqq \im C^\dag C$, $R_C \coloneqq \re C^\dag C$, $\tilde{V}\coloneqq J V$.

We now prove that, for all odd $k$, the conditions (\ref{stabilizability}) are automatically satisfied. This will considerably simplify our analysis of two-mode Gaussian states below.

\begin{theorem}\label{theorem_covariance}
Let $l\in\mathbb{N}$. Then for all $V$, $C$ as in (\ref{stabilizability})
\begin{equation} \label{eq:theorem_supp_cv}
2 \tr \left[I_C J \tilde{V}^{2l+1} \right] + 
	\tr \left[R_C J \tilde{V}^{2l} \right] = 0.
\end{equation}
\end{theorem}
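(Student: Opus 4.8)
The plan is to reduce the claim to a statement about traces of products of symmetric and antisymmetric matrices. The key structural facts are: $J^T=-J$, $J^2=-I$ (so $J$ is antisymmetric); $V$ is symmetric; and both $R_C=\re C^\dag C$ and $I_C=\im C^\dag C$ come from the Hermitian matrix $C^\dag C$, so $R_C$ is symmetric while $I_C$ is antisymmetric. From these, $\tilde V=JV$ satisfies $\tilde V^T = V^T J^T = -VJ$, i.e.\ $\tilde V$ is conjugate (up to sign) to its transpose via $J$: indeed $J\tilde V^T J^{-1}=J(-VJ)J^{-1}=-JV=-\tilde V$, hence $\tilde V$ is similar to $-\tilde V^T$, and more useful is the bare identity $(\tilde V^k)^T=(-1)^k V J (\tilde V^{k-1})^{\dots}$ which I will make precise below.

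First I would establish the transpose rule $(\tilde V^{k})^{T} = (-1)^{k}\, J\,\tilde V^{k}\,J$ (equivalently $(\tilde V^k)^T J = (-1)^k J \tilde V^k$). This follows by induction: $(\tilde V)^T=(JV)^T=VJ = -J(JV)J^{-1}\cdot(-1)=\dots$; concretely $V J = J^{-1}(JV)J = -J\tilde V J$, giving the $k=1$ case, and multiplying repeatedly gives the general power since $(\tilde V^{k})^T=((\tilde V)^T)^{k}=(-J\tilde V J)^{k}=(-1)^k J\tilde V^k J$ using $J^{-1}=-J$ to telescope the inner factors. Second, I would transpose each of the two traces in \eqref{eq:theorem_supp_cv}. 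For the first term, $\tr(I_C J \tilde V^{2l+1}) = \tr[(I_C J\tilde V^{2l+1})^T] = \tr[(\tilde V^{2l+1})^T J^T I_C^T] = \tr[(-1)^{2l+1}J\tilde V^{2l+1}J \cdot(-J)\cdot(-I_C)] = -\tr[J\tilde V^{2l+1}(J^2)(-I_C)]$; using $J^2=-I$ and cycling the trace this collapses back to $-\tr(I_C J\tilde V^{2l+1})$, so that term equals its own negative and hence vanishes \emph{only} if treated together with — wait, more carefully, it forces $\tr(I_C J\tilde V^{2l+1})$ to a fixed value, and the analogous computation on the second term, $\tr(R_C J\tilde V^{2l})$, using $R_C^T=R_C$ and the even power $(-1)^{2l}=1$, produces a relation coupling the two traces. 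The honest path: apply the transpose/cycling manipulation to the \emph{whole} left-hand side of \eqref{eq:theorem_supp_cv} at once, show it maps to $(-1)$ times itself, and conclude it is zero.

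Concretely, I expect the combined computation to run: let $X := 2\tr(I_C J\tilde V^{2l+1})+\tr(R_C J\tilde V^{2l})$. Take the transpose inside each trace, substitute $I_C^T=-I_C$, $R_C^T=R_C$, $J^T=-J$, and the power rule above, then cyclically permute and use $J^2=-I$ to reassemble; the two sign sources (the antisymmetry of $I_C$ paired with the odd power $2l+1$ in the first term, and the symmetry of $R_C$ paired with a single leftover $J$ from the mismatch $2l$ versus $2l+1$ in the second term) should both contribute a net $-1$, yielding $X=-X$, hence $X=0$. The main obstacle is purely bookkeeping: getting every $J$, every transpose, and every cyclic shift to land correctly so that the first and second terms transform with the \emph{same} overall sign — the structure of \eqref{eq:theorem_supp_cv}, with the adjacent powers $2l+1$ and $2l$, is evidently designed precisely so that this works, but one must be careful that the single extra $\tilde V$ in the first term versus the extra loose $J$ implicit in relating $\tilde V^{2l}$ to $\tilde V^{2l+1}$ are accounted for. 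I would double-check the final sign on the smallest nontrivial case $l=0$, where the claim reads $2\tr(I_C\tilde V J\tilde V^{0}\dots)$ — rather, $2\tr(I_C J\tilde V)+\tr(R_C J)=0$, i.e.\ $2\tr(I_C J J V)+\tr(R_C J)= -2\tr(I_C V)+\tr(R_C J)=0$; since $I_C$ and $V$ are respectively antisymmetric and symmetric, $\tr(I_C V)=0$, and since $R_C$ and $J$ are symmetric and antisymmetric, $\tr(R_C J)=0$ — confirming the $l=0$ case and fixing the sign conventions for the induction/transpose argument in general.
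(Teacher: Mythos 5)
Your overall strategy---transpose inside the trace, exploit $J^T=-J$, $V^T=V$, $I_C^T=-I_C$, $R_C^T=R_C$, and cyclicity---is exactly the route the paper takes, and it does work. But your execution contains a genuine sign error that derails the argument. The transpose rule you state, $(\tilde V^k)^T=(-1)^k J\tilde V^k J$, is wrong: already for $k=1$ one has $\tilde V^T=(JV)^T=V(-J)=-VJ$, while $J\tilde VJ=J^2VJ=-VJ$, so in fact $\tilde V^T=+J\tilde VJ$, and telescoping gives $(\tilde V^k)^T=(-1)^{k-1}J\tilde V^kJ$. If you carry your version of the rule through the first trace consistently, the signs conspire to give the tautology $\tr(I_CJ\tilde V^{2l+1})=+\tr(I_CJ\tilde V^{2l+1})$ rather than the desired $X=-X$; your written intermediate step $-\tr[J\tilde V^{2l+1}(J^2)(-I_C)]$ only lands on $-X$ because of a second, compensating sign slip in combining $J\cdot(-J)$. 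This is precisely what pushes you into the ``wait, more carefully\ldots'' hedge and the retreat to transforming the whole left-hand side at once --- a combined computation you then only describe (``I expect\ldots'', ``the main obstacle is purely bookkeeping'') but never perform, so the proof is incomplete as written. Your $l=0$ check does not catch the error because it never invokes the power rule for $k\geqslant 2$.

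With the corrected rule the argument closes cleanly and, as in the paper, \emph{each trace vanishes separately}: for the first term, $X=\tr[(\tilde V^{2l+1})^TJ^TI_C^T]=\tr[J\tilde V^{2l+1}J\,(-J)(-I_C)]=\tr[J\tilde V^{2l+1}J^2I_C]=-\tr[I_CJ\tilde V^{2l+1}]=-X$, hence $X=0$; the second term is analogous using $(-1)^{2l-1}=-1$ and $R_C^T=R_C$ (and for $l=0$ it reduces to $\tr(R_CJ)=0$, symmetric times antisymmetric, as you correctly note). So no joint cancellation between the two terms is needed --- the structure of the statement with adjacent powers $2l+1$ and $2l$ is not ``designed'' to make the two terms cancel against each other; each is zero on its own.
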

 
\begin{proof}
Let us denote the first trace by $X$. Since transposition does not change the value of the trace, we have
\begin{equation}
\begin{split}
X = \tr \left[ I_C J \tilde{V}^{2l+1} \right]^T	= \tr \left[(V^T J^T)^{2l+1} J^T I_C^T \right].
\end{split}
\end{equation}
The matrices $J$, $C$ satisfy  $J^T=-J$, $J^2=-\mathds{1}_{2N}$, $R_C^T=R_C$ and $I_C^T=-I_C$ \cite{stabilizability_cv_systems}. Performing all the transpositions accordingly produces an extra minus sign:
\begin{equation}
\begin{split}
X = - \tr \left[(V J)^{2l+1} J I_C \right].
\end{split}
\end{equation}
We can now use the fact that $J^2 = - \mathds{1}_{2N}$ to~cancel out the last two $J$ matrices. At~the same time, we can insert $\mathds{1}_{2N}=-J^2$ in front of~the trace. Obviously, this produces no overall change in~sign:
\begin{equation}
\begin{split}
X = - \tr \left[J(J V)^{2l+1}I_C \right] =- \tr \left[I_C J \tilde{V}^{2l+1} \right] = -X,
\end{split}
\end{equation}
where we have used the cyclic property of~the trace. Therefore, we have shown that the first term in~(\ref{eq:theorem_supp_cv}) equals its negative, and thus vanishes for all $l$. The second term vanishes in~an analogous way.
\end{proof}

{Theorem~\ref{theorem_covariance}} states that all the odd ($k\in\{1,3,\ldots\}$) stabilizability conditions (\ref{stabilizability}) are always fulfilled. Thus, in order to investigate the stabilizability of an $N$-mode covariance matrix, one needs to solve only $N$ rather than $2N$ equations.

\section{Stabilizability of entangled two-mode Gaussian states}
\label{sec:Results}
The reduced number of stabilizability conditions (applying Theorem \ref{theorem_covariance}) allows us to investigate the stabilizability of two-mode entangled states analytically. If we denote by $\vec{z} \coloneqq (a,b,c_+,c_-)$ the set of variables parametrizing the state $V$, and by $\vec{t}$ the additional parameters that come from the dissipator (\ref{dissipator}) [and thus parametrize the matrix $C$ in Eq. (\ref{stabilizability})], then the desired covariance matrices $V(\vec{z})$ describe states that are \begin{subequations}
\begin{enumerate}[(i)]
    \item entangled -- that is, are characterized by positive logarithmic negativity: 
    \begin{equation} \label{constraints_general_log_neg}
    E_\mathcal{N}(\vec{z})>0,
    \end{equation}
    
    \item physical -- that is, satisfy the Heisenberg uncertainty principle (\ref{Heis}):
    \begin{equation} \label{constraints_general_h}
    \begin{split}
        h_{1}(\vec{z}) & \coloneqq 
        	4\Delta(\vec{z})-16\det V(\vec{z})-1 \leqslant 0, \\
        h_{2}(\vec{z}) & \coloneqq 
        	-4\Delta(\vec{z})+2 \leqslant 0,
    \end{split}
    \end{equation}
    
    \item stabilizable -- that is, satisfy the conditions (\ref{stabilizability}) for $k=2$ and $k=4$:
    \begin{equation} \label{constraints_general_g}
    \begin{split}
        g_{1}(\vec{z},\vec{t}) &\coloneqq
        	2 \tr \left[I_C(\vec{t}) 
    		J \tilde{V}^2(\vec{z}) \right] 
    		+ \tr \left[R_C(\vec{t}) 
    		J \tilde{V}(\vec{z}) \right] = 0, \\
        g_{2}(\vec{z},\vec{t}) &\coloneqq
        	2 \tr \left[I_C(\vec{t}) 
    		J \tilde{V}^4(\vec{z}) \right]
    		+ \tr \left[R_C(\vec{t}) 
    		J \tilde{V}^3(\vec{z}) \right] = 0 .
    \end{split}
    \end{equation}  
\end{enumerate}\end{subequations}
The existence and specific form of the solutions to the equation system (i)--(iii) depend on the dissipative model at hand. {We emphasize that, while bath engineering may introduce some flexibility on the side of the dissipator \cite{using_dissipation_1,using_dissipation_2,Lyapunov_stationarity}, we focus here on fixed dissipators (due to an uncontrolled bath, and/or due to an engineered, but fixed, environment).} This implies that, in our considerations, we generally treat the parameters $\vec{t}$ as fixed, while manipulating the vector~$\vec{z}$. 

{Interestingly,  while the stabilizability conditions (\ref{stabilizability}) are in general only necessary, one can, for all cases discussed below, determine the corresponding stabilizing Hamiltonians by solving eq. (\ref{eq:v_evolution}) with vanishing left-hand side. Thus, in all the cases discussed below, we can consider states satisfying the constraints (\ref{constraints_general_g}) to be stabilizable.}

\subsection{Two modes with local damping}
\label{sec:two_modes_with_local_damping}
In the case of {local damping}, the two modes interact with independent environments, resulting in uncorrelated loss of particles/excitations in the modes \cite{two_modes_damping}. This situation describes a generic challenge faced by technologies employing entanglement of two-mode Gaussian states, such as teleportation, quantum cryptography and quantum computation \cite{Gaussian_states_in_technologies}. Clearly, the local dissipators act adversarial to nonlocal resources such as entanglement. Therefore, it is relevant to analyze the amount of entanglement that can be upheld by appropriate choice of the control Hamiltonian.

The Lindblad operators have the form \cite{stabilizability_cv_systems}
\begin{equation} \label{lindblad_op}
\begin{split}
\hat{L}_k & \coloneqq \sqrt{\frac{\gamma_k}{2}} 
	\left( \frac{\hat{x}_k}{x_0} + i x_0 \hat{p}_k \right),
\end{split}
\end{equation}
where in the adopted notation the rates $\gamma_k \geqslant 0$, $k\in\{1,2\}$, are responsible for the strength of dissipation in each mode, and $x_0 \in \mathbb{R}_+$. Note that, if $x_0=1$, the operators (\ref{lindblad_op}) are proportional to the annihilation operators $\hat{a}_k\coloneqq(\hat{x}_k+i\hat{p}_k)/\sqrt{2}$ of the respective modes. In general, $x_0$ can be interpreted as the system's characteristic length scale, which, in the case of the standard harmonic oscillator, is determined by the Hamiltonian \cite{stabilizability_cv_systems}. Recall that, in our geometric approach, the Hamiltonian is a priori unknown; however, in principle it can always be determined \cite{stabilizability_geometric}.

We stress that, because the two modes interact with independent environments, in the absence of a control Hamiltonian, the steady state of the system (if it exists) is separable [this can be explicitly seen by setting $G=dV/dt=0$ in Eq. (\ref{eq:v_evolution})]. This reconfirms that the family of dissipators at hand is adversary to entanglement. 

The choice (\ref{lindblad_op}) implies
\begin{equation}
\begin{split}
\vec{c}_1(\vec{t}) &= \sqrt{\frac{\gamma_1}{2}} 
	\left( x_0^{-1}, i x_0, 0, 0 \right)^T, \\
\vec{c}_2(\vec{t}) &= \sqrt{\frac{\gamma_2}{2}} 
	\left( 0, 0, x_0^{-1}, i x_0\right)^T,
\end{split}
\end{equation}
where the parameters are $\vec{t}=(x_0,\gamma_1,\gamma_2)$. Substituting the resulting $C$ into (\ref{stabilizability}) [with $V$ taken in the standard form (\ref{standard})] then yields:
\begin{equation} \label{damped_cond}
\begin{split}
0=& \: g_{1}(\vec{z},\vec{t}) = 
	\frac{\gamma_1}{2} \left[\left(x_0^{-2}+x_0^{2}\right) a - 4 a^2\right] \\
	& \qquad\: + \frac{\gamma_2}{2} \left[\left(x_0^{-2}+x_0^{2}\right) b - 4 b^2\right] 
	- 2 (\gamma_1 + \gamma_2) c_+ c_-, \\
0=& \: g_{2}(\vec{z},\vec{t}) =  - 2(\gamma_1+\gamma_2)\left(ab-c_+^2\right)\left(ab-c_-^2\right) \\
    & + \frac{1}{2}(\gamma_2 a + \gamma_1 b)
	\left[ \left(x_0^{-2}+x_0^{2}\right) ab 
	- \left(x_0^{-2}c_+^2 + x_0^{2} c_-^2\right) \right],
\end{split}
\end{equation}
where we have simplified $g_{2}(\vec{z},\vec{t})$ assuming $g_{1}(\vec{z},\vec{t})=0$.

The above system can be solved, for example, by extracting $c_{+}(a,b,c_-,\vec{t})$ from the first equation, substituting it into the second equation, and then solving the second equation for $\big(c_{-}^2\big)_k(a,b,\vec{t})$, $k\in\{1,2\}$. The solution can then be inserted into the constraints (\ref{constraints_general_log_neg}, \ref{constraints_general_h}), yielding a rather complex set of inequalities{, see Appendix~\ref{sec:explicit_solution_to_the_general_damping_problems} for details}.

While this set of inequalities can still be solved numerically, we focus here on two special classes of states, for which we give exact solutions. Based on these solutions, we then argue about the expected results in the general case. The respective special cases concern states with standard form $c_+ = -c_- \equiv c$ \footnote{Note that it is necessary for an entangled state to have $c_+$ and $c_-$ with opposite signs. This can be shown by manipulating eq. (\ref{constraints_general_log_neg}) and taking into account the first of the constrains (\ref{constraints_general_h}).}, and states with standard form $a = b$.

We emphasize that both restrictions are natural, with the former in particular being fulfilled by all squeezed thermal states. In both cases, we show that the maximum value of logarithmic negativity cannot exceed $E_{\mathcal{N},\max}=\log 2$, and that this value is obtained only, or most {easily} (as explained below), if $\gamma_1=\gamma_2$ and $x_0=1$. We then argue that these conditions are optimal for all states, and prove that, under this assumption, the value $E_{\mathcal{N},\max}=\log 2$ is maximal for all states and all environments described by the operators~(\ref{lindblad_op}).

\paragraph*{Case of ${c_+ = -c_- \equiv c}$.}
In order for the dissipator to be non-trivial, at least one of the rates $\gamma_k$ must be strictly greater than $0$. Due to the symmetry between the modes, we can choose, with no loss of generality,  $\gamma_1>0$. The equations (\ref{damped_cond}) with $c_+ = -c_- \equiv c$ are thus equivalent to
\begin{equation} \label{damped_abc}
\begin{split}
0= \frac{g_{1}(\vec{z},\vec{t})}{\gamma_1} = & \:
	\left(\chi a - 2 a^2\right) 
	+ \gamma \left(\chi b - 2 b^2\right) 
	+ 2 (1 + \gamma) c^2, \\
0= \frac{g_{2}(\vec{z},\vec{t})}{\gamma_1} = & \: 
	\left[(\gamma a + b) \chi
	- 2(1+\gamma)\left(ab-c^2\right)\right]\left(ab-c^2\right),
\end{split}
\end{equation}
where $\chi \coloneqq (x_0^{-2}+x_0^2)/2 \geqslant 1$ and $\gamma \coloneqq \gamma_2/\gamma_1 \in [0,1]$ (because, again, with no loss of generality we can assume $\gamma_2\leqslant\gamma_1$).

Assuming $a\geqslant b$, it follows from the Heisenberg constraint $h_2(\vec{z})\leqslant 0$ that:
\begin{equation}
\begin{split}
2\leqslant 4(a^2+b^2+2c_-c_+)=4a^2+4b^2-8c^2\leqslant 8a^2,
\end{split}
\end{equation}
and thus $a \geqslant 1/2$. Analogously, if $a<b$, one obtains $b \geqslant 1/2$. Hence, $a,b\geqslant 1/2$ are necessary conditions for the system (\ref{constraints_general_log_neg}, \ref{constraints_general_h}) to be solvable. We therefore can, without loss of generality, use the squeezed state parametrization~(\ref{squeezed_parametrization}).

Once again solving the stabilizability conditions (\ref{damped_abc}), this time for $\nu_\pm$, we obtain
\begin{equation} \label{nu_+-damped}
\begin{split}
\nu_-(r,\chi,\gamma) =&\: \chi
	\frac{\cosh^2 r + \gamma \sinh^2 r}{1+\gamma+(1-\gamma)\cosh 2r}, \\
\nu_+(r,\chi,\gamma) =&\: \chi
	\frac{\gamma \cosh^2 r + \sinh^2 r}{1+\gamma-(1-\gamma)\cosh 2r}.
\end{split}
\end{equation}
Since any two-mode squeezed state fulfills the Heisenberg uncertainty relation (\ref{constraints_general_h}), the system (\ref{constraints_general_log_neg}, \ref{constraints_general_h}) is now reduced to
\begin{equation}
\begin{split}
E_N(r,\chi,\gamma) > 0 
\:\:\textrm{and}\:\: 
1/2 \leqslant \nu_-(r,\chi,\gamma) \leqslant \nu_+(r,\chi,\gamma).
\end{split}
\end{equation}
Solving this system we find that $E_N(r,\chi,\gamma)$ is maximized (only) in the limit $\gamma \to 1$.

\begin{figure*}[!tb]
\begin{subfigure}{0.49\textwidth}
  \includegraphics[width=1\textwidth]{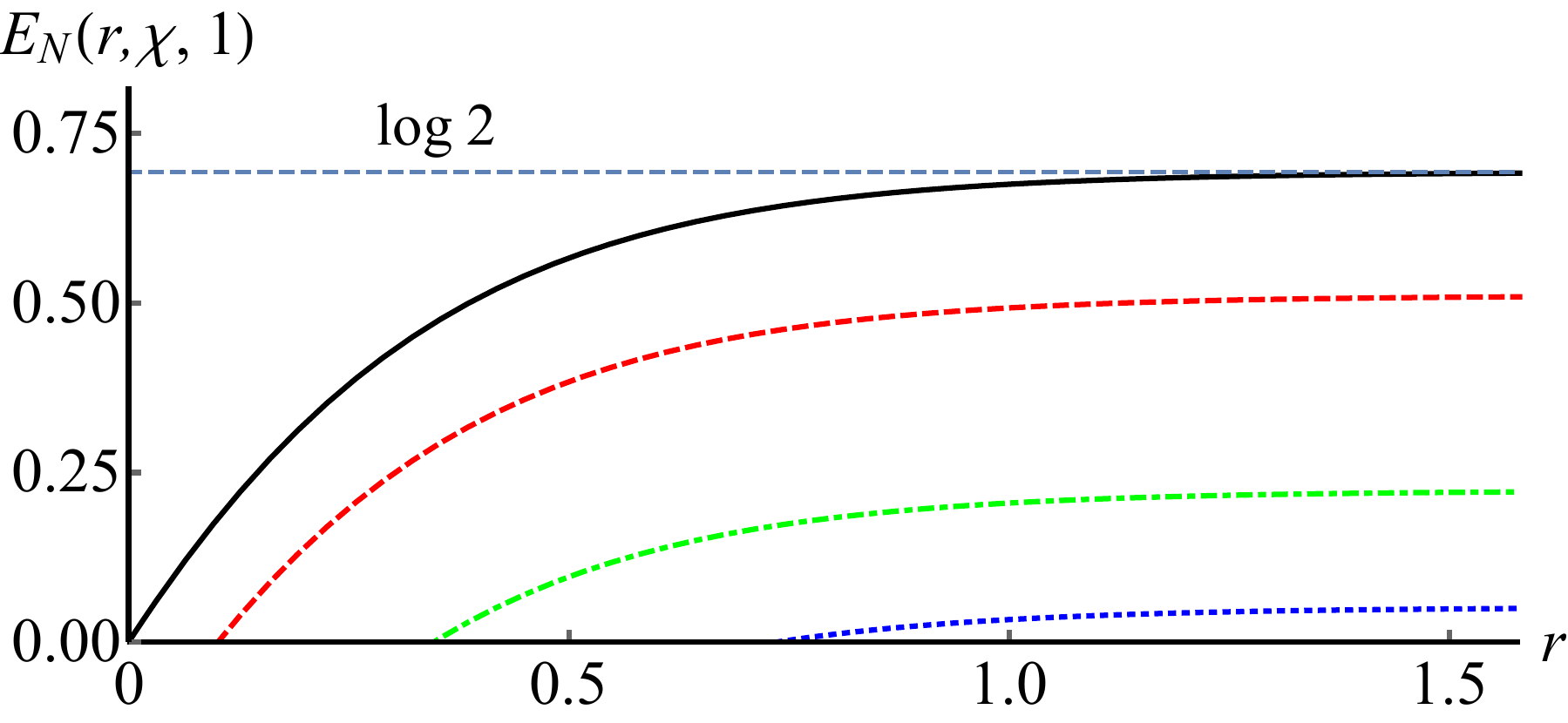}
  \caption{}
  \label{fig:E_N}
\end{subfigure}\hspace{2mm}%
\begin{subfigure}{0.49\textwidth}
  \includegraphics[width=1\textwidth]{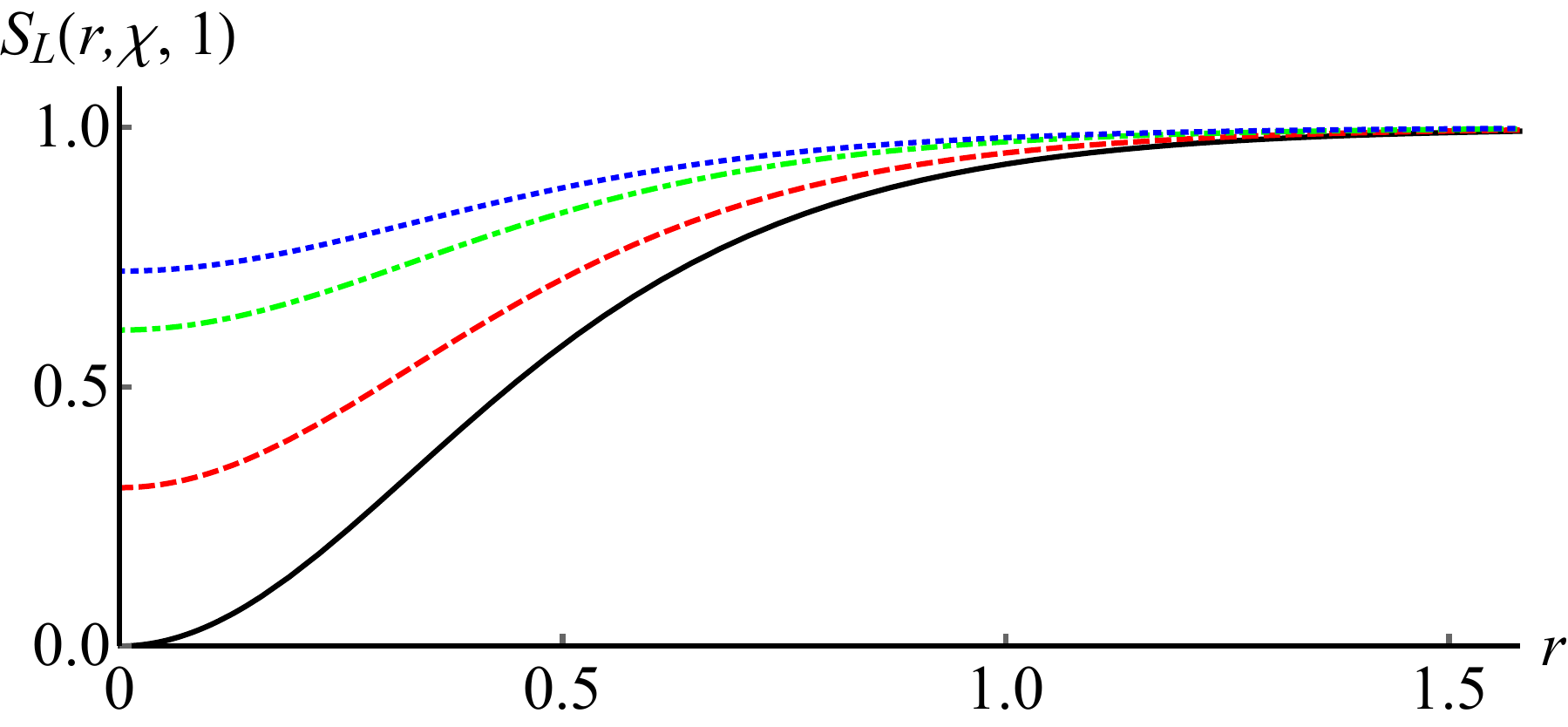}
  \caption{}
  \label{fig:S_L}
\end{subfigure}
\begin{subfigure}{0.49\textwidth}
  \centering
  \includegraphics[width=1\textwidth]{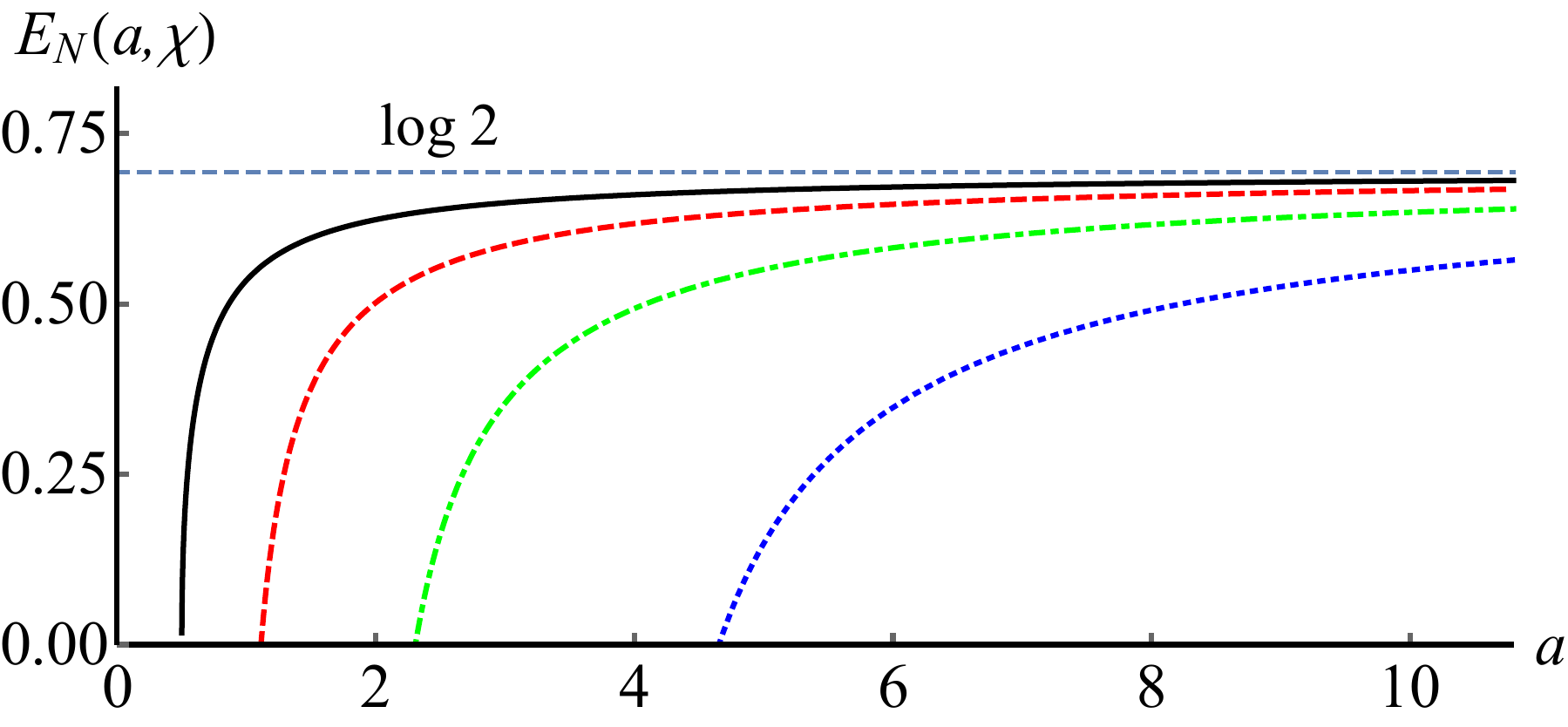}
  \caption{}
  \label{fig:E_N_a=b2}
\end{subfigure}\hspace{2mm}%
\begin{subfigure}{0.49\textwidth}
  \centering
  \includegraphics[width=1\textwidth]{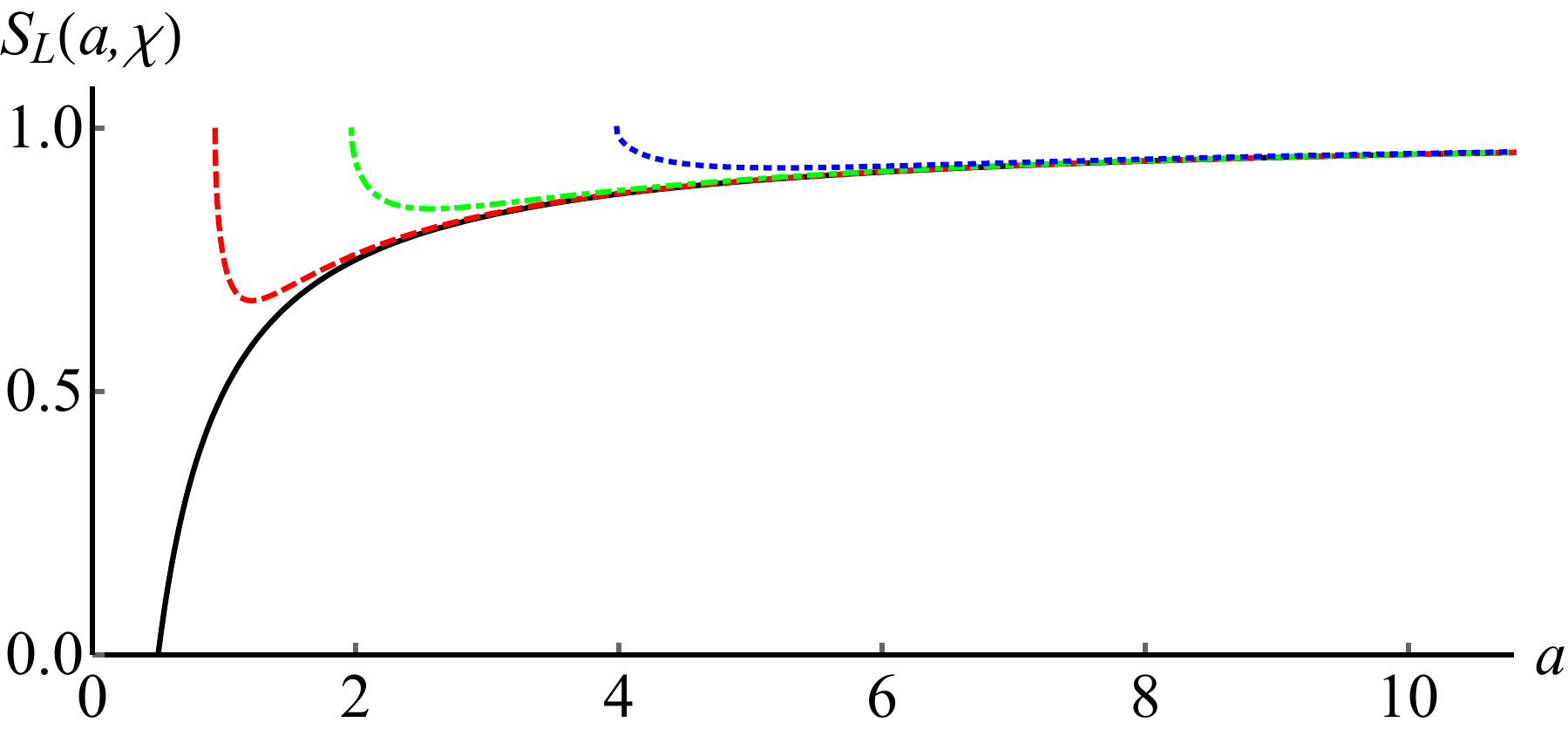}
  \caption{}
  \label{fig:S_L_a=b2}
\end{subfigure}
\captionsetup{width=1\textwidth,justification=centerlast}
\caption{Stabilizable entanglement in the presence of local damping. In the top row, the logarithmic negativity $E_\mathcal{N}(r,\chi,1)$ (a) and the linear entropy $S_L(r,\chi,1)$ (b) are plotted as functions of $r$, for the case $c_+=c_-$, and with four different values of $\chi \in \{1.0,1.2,1.6,1.9\}$ -- solid (black), dashed (red), dot-dashed (green) and dotted line (blue), respectively. In the bottom row, the logarithmic negativity $E_\mathcal{N}(a,\chi)$ (c) and the linear entropy $S_L(a,\chi)$ (d) are plotted as functions of $a$, for the case $a=b$, and with four different values of $\chi \in \{1,2,4,8\}$ -- solid (black), dashed (red), dot-dashed (green) and dotted line (blue), respectively. In both cases we find that, while the dissipator acts adversarial to the entanglement, logarithmic negativities assume positive values, which are bounded from above by $\log 2\approx 0.69$. Moreover, as the logarithmic negativities grow, so do the corresponding linear entropies, indicating an (asymptotic) tradeoff relation between the~entanglement and the purity of stabilizable states.}
\label{fig:E_N_and_S_L}
\end{figure*} 

Using this, we now study the system with $\gamma = 1$. The stabilizable state then becomes symmetric:
\begin{equation} \label{nu_+-damped_simple}
\begin{split}
\nu_\pm(r,\chi,1) = \frac{\chi}{2}\cosh 2r.
\end{split}
\end{equation}
Obviously, this state is always physical, as $\nu_\pm(r,\chi,1)\geqslant 1/2$ for all $r,\chi$. The entanglement condition, on the other hand, leads to the following solution in terms of the squeezing parameter:
\begin{equation}
2r > \artanh \left(\chi-1\right),
\end{equation}
where $\chi\leqslant 2$. As long as this simple criterion is fulfilled, the logarithmic negativity (\ref{log_neg}) is positive and reads
\begin{equation} \label{E_N_damped}
E_\mathcal{N}(r,\chi,1) = \log \big(2/\chi\big) - \log \left(1+e^{-4r}\right).
\end{equation}
Evidently, for a fixed dissipator (fixed value of the characteristic length parameter $\chi$), the maximum is attained in the limit of infinite squeezing 
\begin{equation} \label{limit_damped}
\lim\limits_{r\to\infty} E_\mathcal{N}(r,\chi,1) = \log \big(2/\chi\big),
\end{equation}
which, as we anticipated, is upper bounded by $E_{\mathcal{N},\max}=\log 2$ (for $\chi=1$).

Regarding purity, we find that, despite the symmetry between the two modes: $a=b$, $c_+=-c_-$, the state is highly mixed -- in the sense that its entropy is near-maximal \footnote{Some readers may be familiar with the fact that in finite-dimensional systems all states that are {sufficiently close} to the maximally mixed state are separable \cite{separable_ball}. We stress that this fact does not extend to continuous variable systems, and so there is no inconsistency with our findings connecting the amount of entanglement to the amount of mixedness in stabilizable states.}. Indeed, the linear entropy (\ref{S_L_simple}) takes the form
\begin{equation}
S_L(r,\chi,1)=1-\left(\chi\cosh 2r\right)^{-2}.
\end{equation}
Clearly, $S_L(r,\chi,1)$ rapidly approaches its maximal value $1$ as a function of $r$, regardless of the value of the characteristic length parameter $\chi$. This implies that, independent from the length scale of the system, the only stabilizable entangled states are (highly) mixed. We note that similar results are obtained when considering the Tsallis and R\'{e}nyi entropies.

$E_\mathcal{N}(r,\chi,1)$ and $S_L(r,\chi,1)$ are plotted in Figures \ref{fig:E_N}-\ref{fig:S_L} as functions of $r$ for four different values of $\chi$. We find that the logarithmic negativity assumes a finite, positive value in the limit $r\to\infty$. Notably, regardless of the value of $\chi$, all stabilizable entangled states are characterized by a non-zero degree of mixedness (the only stabilizable pure state is the vacuum state, $r=0$).

\paragraph*{Case of ${a=b}$.} In this case, the stabilizability conditions (\ref{damped_cond}) become effectively independent~of~the rates $\gamma_k$. Solving them for $c_\pm$, as described at the beginning of this section, we obtain two solutions $\big(c_{\pm}\big)_k(a,\chi)$, $k\in\{1,2\}$. The first of these solutions features $c_{+}=-c_{-}$. This is just a special case of the problem solved previously.

The second solution takes the following explicit form:
\begin{equation}
\begin{split}
c_+(a,\chi)=\sqrt{\frac{a (2 a-\chi ) \left[1+2(2 a-\chi)\left(q_\chi+\chi \right)\right]}{8 a \left(q_\chi+\chi \right)-2}},
\end{split}
\end{equation}
where $q_\chi\coloneqq\sqrt{\chi^2-1}$, with the corresponding $c_-(a,\chi)=a(\chi-2a)/\big[2c_+(a,\chi)\big]$. The solution can then be substituted into the system (\ref{constraints_general_log_neg}, \ref{constraints_general_h}), yielding the following constraint:
\begin{equation}
\begin{split}
8a> \left(9 \chi+ 4 \sqrt{3}q_\chi+\sqrt{129 \chi ^2+72\sqrt{3}\chi q_\chi -80}\right).
\end{split}
\end{equation}
The logarithmic negativity and linear entropy read
\begin{equation}
\begin{split}
E_{\mathcal{N}}\left(a,\chi\right)&=
	-\log\sqrt{2a(4a-\chi)-2p_\chi(a)\sqrt{2a(2a-\chi)}},\\
S_{L}\left(a,\chi\right)&=
	1-1/p_\chi(a),
\end{split}
\end{equation}
where $p_\chi(a)\coloneqq 2a\abs{4a-\chi}/\sqrt{16a^2-8\chi a+1}$. Both quantities are monotonically increasing functions of the parameter $a$. As before, the logarithmic negativity is bounded by $E_\mathcal{N,\max}=\log 2$, which is reached in the limit of {extreme} covariance matrices, $a\to\infty$ (this time for all $\chi$). These results are illustrated in Figures \ref{fig:E_N_a=b2}-\ref{fig:S_L_a=b2}, where $E_\mathcal{N}(a,\chi)$ and $S_L(a,\chi)$ are plotted as functions of $a$ for four different values of~$\chi$.

\vspace{5mm} 

Our findings for the two cases, $c_+=-c_-$ and $a=b$, suggest that, among all the environments described by Lindblad operators of the form (\ref{lindblad_op}), the preservation of entangled states is {the most efficient} when $\chi=x_0=1$ and $\gamma=\gamma_1/\gamma_2=1$. More precisely, we conjecture that the logarithmic negativity of any state (i.e., for fixed $\vec{z}$) takes its maximum for the dissipator given by $\chi=\gamma=1$.

We now solve the system once again, this time for a general state [no assumptions about $(a,b,c_\pm)$], but for the specific dissipator $\chi=\gamma=1$. We show that the logarithmic negativity is then again bounded from above by $E_{\mathcal{N},\max}=\log 2$. This supports our conjecture that this value is maximal for all states and all environments described by the operators~(\ref{lindblad_op}).

\paragraph*{Case of ${\chi=\gamma=1}$.} Solving (\ref{damped_cond}) for $\big(c_{\pm}\big)_k(a,b)$, $k\in\{1,2\}$, and substituting into the system (\ref{constraints_general_log_neg}, \ref{constraints_general_h}), we obtain the solution $a\geqslant 1/2$, $a=b$. This is a special case of the problem solved above. Thus, we conclude that, under the assumption that, for a given state, the logarithmic negativity is maximal when $\chi=\gamma=1$, the value $E_{\mathcal{N},\max}=\log 2$ is maximal for all states subject to dissipators described by the operators (\ref{lindblad_op}).

An example Hamiltonian, that stabilizes states characterized by $E_{\mathcal{N}}=E_{\mathcal{N},\max}=\log 2$, is given by 
\begin{equation} \label{eq:hamiltonian_squeezed}
\begin{split}
\hat{H}_{\textnormal{sq}}=
    -i\hbar\omega\big(\hat{a}_1\hat{a}_2-\hat{a}_1^\dag\hat{a}_2^\dag\big),
\end{split}
\end{equation}
where $\omega$ is a positive constant defining the energy levels of the system. The resulting unitary evolution is governed by the squeezing operator (\ref{squeezing_operator}). In other words, our analysis shows that, in the model of local damping, no other quadratic Hamiltonian can outperform the squeezing Hamiltonian (\ref{eq:hamiltonian_squeezed}) in stabilizing entanglement.

\subsection{Dissipative squeezed-state preparation}
\label{sec:two_mode_squeezed_state_preserving_dissipation}
We now discuss dissipators, which are designed to produce two-mode squeezed states, arising from applying the squeezing operator (\ref{squeezing_operator}) with $r=\alpha$ to the two-mode vaccuum state \cite{using_dissipation_2}. In other words, these dissipators are specifically engineered to preserve two-mode squeezed states with $r=\alpha$. Such models have been discussed in the context of the experimental generation of entanglement \cite{squeezed_dissipation_experimental_1,squeezed_dissipation_experimental_2}. 

\begin{figure*}[!tb]
  \centering
\begin{subfigure}{0.49\textwidth}
  \centering
  \includegraphics[width=1\textwidth]{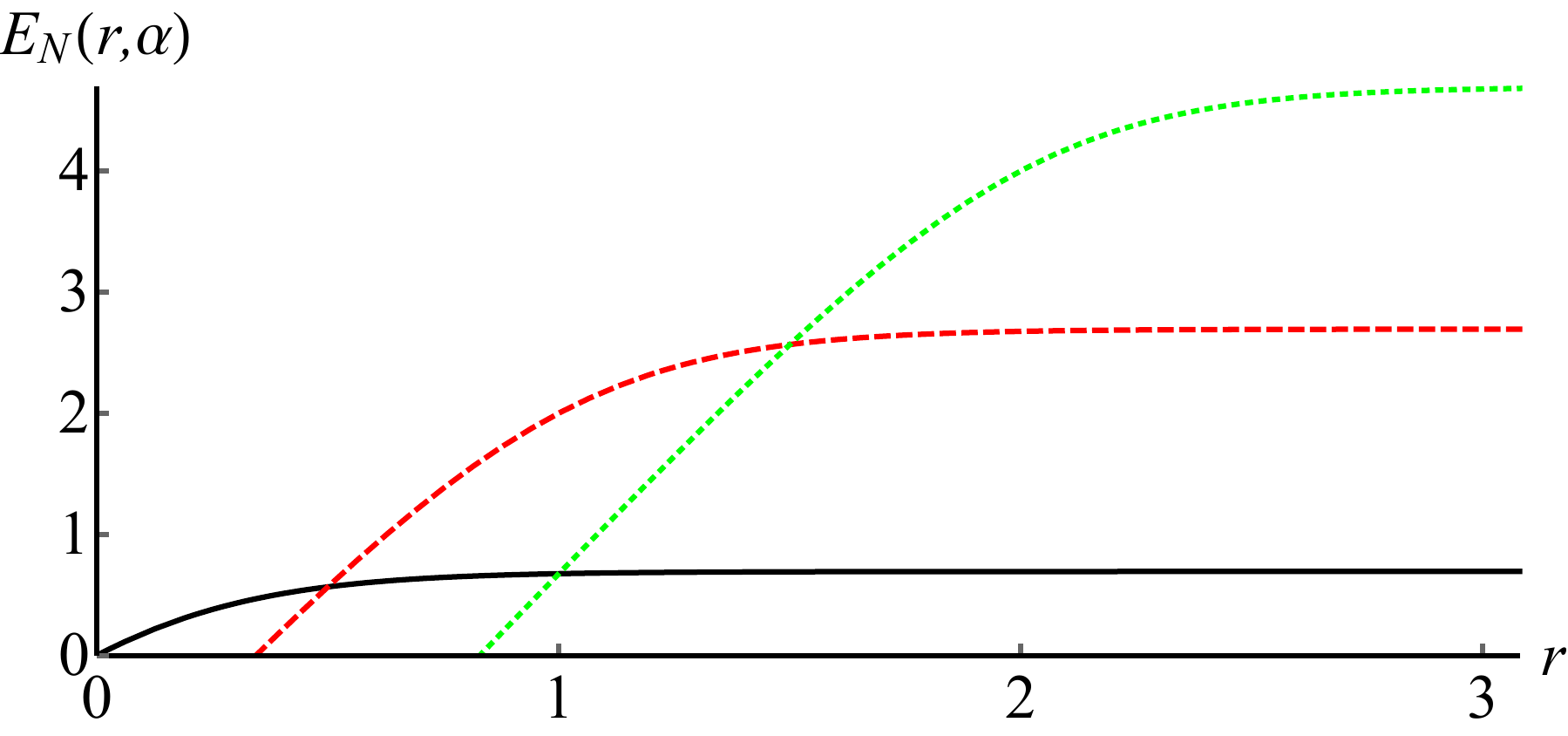}
  \caption{}
  \label{fig:E_N_two-mode}
\end{subfigure}\hspace{2mm}%
\begin{subfigure}{0.49\textwidth}
  \centering
  \includegraphics[width=1\textwidth]{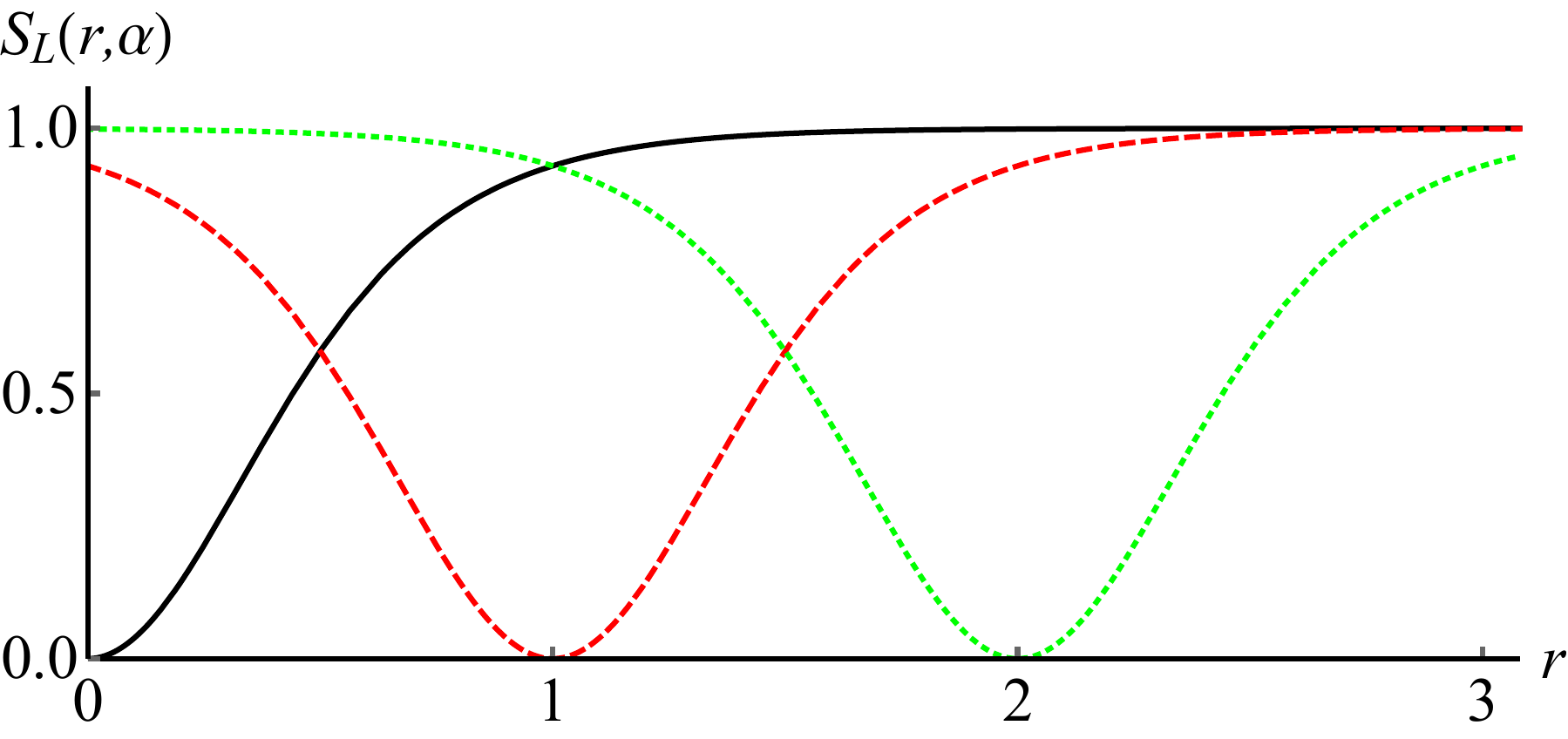}
  \caption{}
  \label{fig:S_L_two-mode}
\end{subfigure}
\captionsetup{width=1\textwidth,justification=centerlast}
\caption{Stabilizabile entanglement for dissipators engineered to preserve two-mode squeezed thermal states. Shown are the logarithmic negativity $E_\mathcal{N}(r,\alpha)$ (a) and the linear entropy $S_L(r,\alpha)$ (b) as functions of $r$, for three different values of $\alpha \in \{0,1,2\}$ -- solid (black), dashed (red) and dotted (green), respectively. We find that, irrespective of the nonlocal character of the dissipator, the amount of stabilizable entanglement is finite and bounded from above by $2\alpha+\log 2$, a value $\log 2$ greater than the amount of entanglement in the dissipator's {dedicated} two-mode squeezed state. The states achieving this optimal value are close to maximally mixed, while the linear entropies assume their minima at their respective {dedicated} two-mode squeezed states.}
\label{fig:E_N_and_S_L_two-mode}
\end{figure*}

By construction, the dissipator stabilizes the two-mode squeezed state characterized by $r=\alpha$. Consequently, the latter describes the steady state of the system in the absence of a Hamiltonian. However, the model also admits other stabilizable states, possibly characterized by higher entanglement. In this section, we demonstrate that this is indeed the case.

In principle, one could consider only single-mode squeezing, see, e.g., \cite{using_dissipation_2}. Here, we focus on full two-mode squeezing, induced by the two Lindblad operators
\begin{equation} \label{dis_two-modes}
\begin{split}
\hat{L}_1 & \coloneqq 
    \cosh\alpha\, \hat{a}_1 - \sinh\alpha\, \hat{a}_2^\dag,\\
\hat{L}_2 & \coloneqq
	\cosh\alpha\, \hat{a}_2 - \sinh\alpha\, \hat{a}_1^\dag,
\end{split}
\end{equation}
where $\alpha \geqslant 0$. The resulting dissipator consists of two {channels}, each creating a superposition of states in which one of the modes gains and the other loses a particle, with the rate of the losses and gains controlled by the parameter $\alpha$.

{As mentioned above, our objective is to show that there exist stabilizable states that are more entangled than the dissipator's dedicated squeezed thermal states. To this end, it is sufficient to consider the special case $c_+ = -c_- \equiv c$, which includes the aforementioned squeezed states.} As discussed in the previous subsection, we can then use the squeezed thermal state parametrization (\ref{squeezed_parametrization}) with no loss of generality. The conditions (\ref{stabilizability}) assume the form
\begin{equation}
\begin{split}
0&=g_{1}(\nu_\pm,r,\alpha) = 2 \left(\nu _-^2+\nu _+^2\right)
	-\left(\nu _-+\nu _+\right) \cosh 2 (r-\alpha), \\
0&=g_{2}(\nu_\pm,r,\alpha) = \nu _- \nu _+ 
	\left[4 \nu _- \nu _+-\left(\nu _-+\nu _+\right) \cosh 2(r-\alpha)\right],
\end{split}
\end{equation}
where, just as in the case of two modes with local damping, we simplified $g_{2}(\vec{z})$ using $g_{1}(\vec{z})=0$. Comparing the $\cosh 2 (r-\alpha)$ terms in the two equations, one can easily see that they can be simultaneously fulfilled if and only if $\nu_-=\nu_+\equiv \nu$. This immediately leads to the solution:
\begin{equation}
\begin{split}
\nu(r,\alpha) = \frac{1}{2}\cosh 2(r-\alpha).
\end{split}
\end{equation}
The corresponding logarithmic negativity (\ref{log_neg}) is equal to
\begin{equation} \label{E_N_two-mode}
\begin{split}
E_\mathcal{N}(r,\alpha)\coloneqq 
	- \log \left[e^{-2r}\cosh 2(r-\alpha)\right].
\end{split}
\end{equation}
Clearly, the state is always physical, as $\nu(r,\alpha)\geqslant 1/2$ for all $r,\alpha$. As for the presence of entanglement, it follows from the definition (\ref{log_neg}) that the state is entangled if and only if the argument of the above logarithm is smaller than 1. This leads to the following condition: 
\begin{equation}
\begin{split}
4r >  2\alpha- \log(2-e^{-2\alpha}).
\end{split}
\end{equation}
For a fixed dissipator (fixed $\alpha$), we have
\begin{equation} \label{eq:log_neg_sq}
\begin{split}
0 \leqslant E_\mathcal{N}\left(r,\alpha\right)
 	\leqslant \lim\limits_{r\to\infty} E_\mathcal{N}\left(r,\alpha\right) =
	\log 2 + 2\alpha,
\end{split}
\end{equation}
obtainable, e.g., with a Hamiltonian of~the~form~(\ref{eq:hamiltonian_squeezed}). 

We make the following observations: firstly, it is clear that, despite the nonlocal character of the Lindblad operators (\ref{dis_two-modes}), arbitrarily high entanglement can only be obtained in the limit $\alpha\to\infty$. Secondly, and perhaps more interestingly, the value (\ref{eq:log_neg_sq}) is $\log 2$ higher than the logarithmic negativity of the two-mode squeezed state with $r=\alpha$, which the dissipator is engineered to produce by default. In other words, there exist states stabilizable with respect to the dissipator that are more entangled than the {dedicated} two-mode squeezed state. Finally, we can see that, for $\alpha=0$, the maximum negativity is equal to $E_{\mathcal{N},\max}=\log 2$. In fact, one can easily check that, when $\alpha=0$, the logarithmic negativity (\ref{E_N_two-mode}) is exactly equal to that in eq. (\ref{E_N_damped}) with $\chi=1$. This is what we should expect based on the discussion in the previous subsection, as in this case the operators (\ref{dis_two-modes}) coincide with those in (\ref{lindblad_op}) with $x_0=\gamma_2/\gamma_1=1$. Similar results hold for the entropies, in particular the linear entropy
\begin{equation}
\begin{split}
S_L(r,\alpha)\coloneqq \tanh^2 2(r-\alpha).
\end{split}
\end{equation}

The logarithmic negativity and the linear entropy are both plotted in Figure \ref{fig:E_N_and_S_L_two-mode} as functions of $r$ for four different values of $\alpha$. As in the previous models, the logarithmic negativity rapidly approaches its maximum value, $\log 2 + 2\alpha$. We stress again that this maximum value is $\log 2 \approx 0.69$ higher than the logarithmic negativity of the two-mode squeezed state with $r=\alpha$.

The behaviour of the linear entropy deviates from the previous models. We find that, in the neighbourhood of the point $r=\alpha$, there exist highly entangled states that are (nearly) pure. This is simply a consequence of the fact that the dissipator (\ref{dis_two-modes}) is designed to preserve pure two-mode squeezed states with $r=\alpha$. Irrespectively, we find that for a fixed environment (fixed $\alpha$), stabilizable states which maximize entanglement are close to maximally mixed.

\paragraph*{Local perturbation.} We complement our analysis by considering local perturbations of the dissipator (\ref{dis_two-modes}). As argued above, the presence of some local dissipation is usually unavoidable in realistic scenarios. Depending on the strength of the local noise, we must expect that our results regarding the stabilizability of entangled states are adjusted.

To account for this fact, we modify our model by adding two Lindblad operators for local damping (\ref{lindblad_op}), with $\gamma_1=\gamma_2\equiv\eta$ responsible for the relative strength of the local dissipation, and $x_0=1$ for simplicity. The resulting logarithmic negativity reads
\begin{equation}
E_\mathcal{N}(r,\alpha,\eta) = E_\mathcal{N}(r,\alpha)
    -\log\frac{1+\eta\cosh 2r\cosh^{-1}2(r-\alpha)}{1+\eta},
\end{equation}
where $E_\mathcal{N}(r,\alpha)$ refers to the logarithmic negativity of the unmodified model (\ref{E_N_two-mode}). Clearly, regardless of the parameter $\alpha$ of the dissipator, for a fixed state (fixed $r$), the logarithmic negativity is lowered by the presence of local noise. This is in line with our intuition that local dissipation should reduce the stabilizable entanglement.

\subsection{Cascaded oscillators}\label{sec:cascaded_oscillators}
We finally discuss the case of {cascaded oscillators} coupled to the vacuum \cite{using_dissipation_2}. The use of cascaded oscillators is common in experimental setups, ranging from the production of entangled states \cite{OPOs_entanglement} to spectroscopy \cite{OPOs_spectroscopy}. The particular model under consideration has recently been discussed in the context of entanglement distribution \cite{OPOs_computation}. Moreover, this form of mode coupling is leveraged in the Coherent Ising Machine \cite{ising_machine_1}.

\begin{figure*}[!tb]
  \centering
\begin{subfigure}{0.49\textwidth}
  \centering
  \includegraphics[width=1\textwidth]{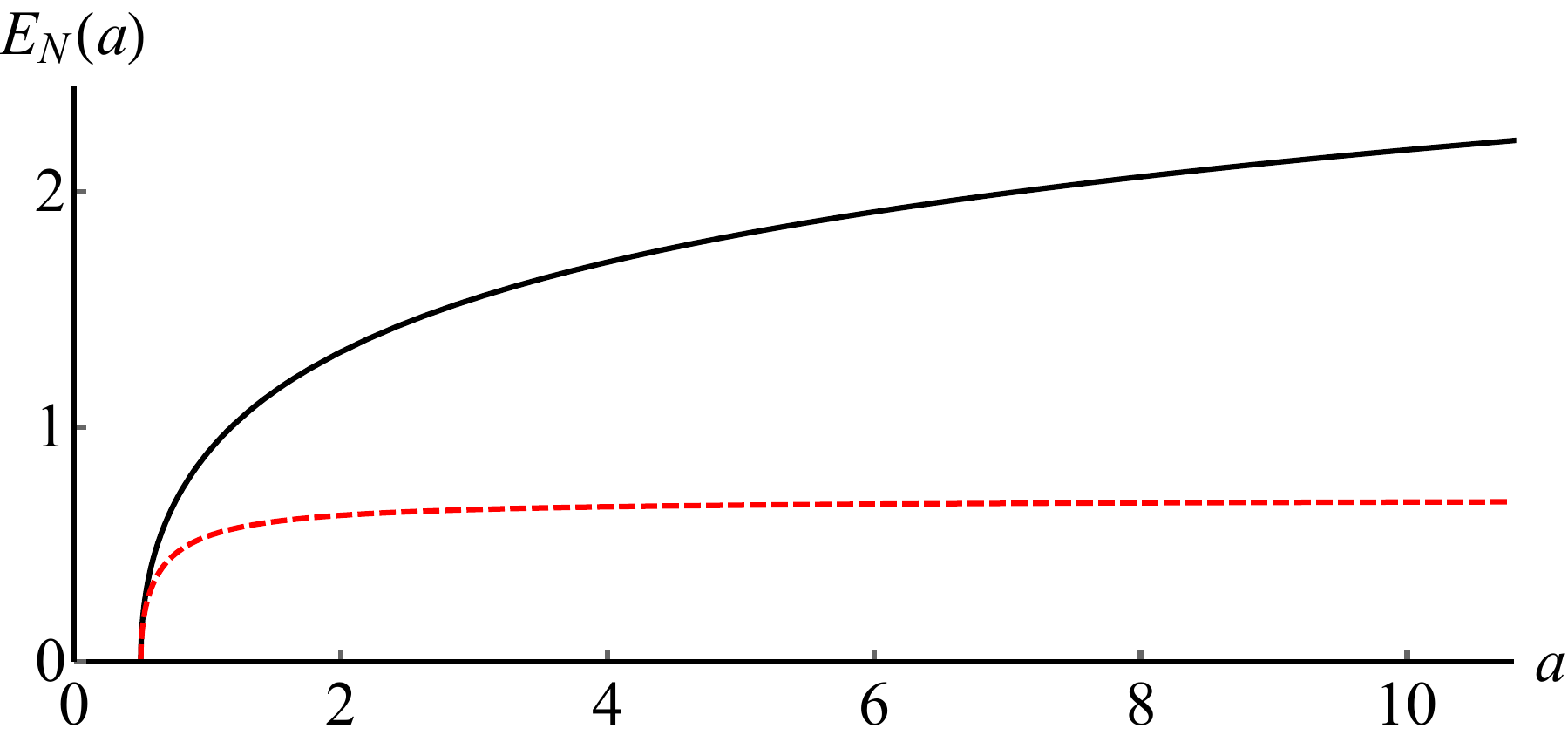}
  \caption{}
  \label{fig:E_N_cascaded}
\end{subfigure}\hspace{2mm}%
\begin{subfigure}{0.49\textwidth}
  \centering
  \includegraphics[width=1\textwidth]{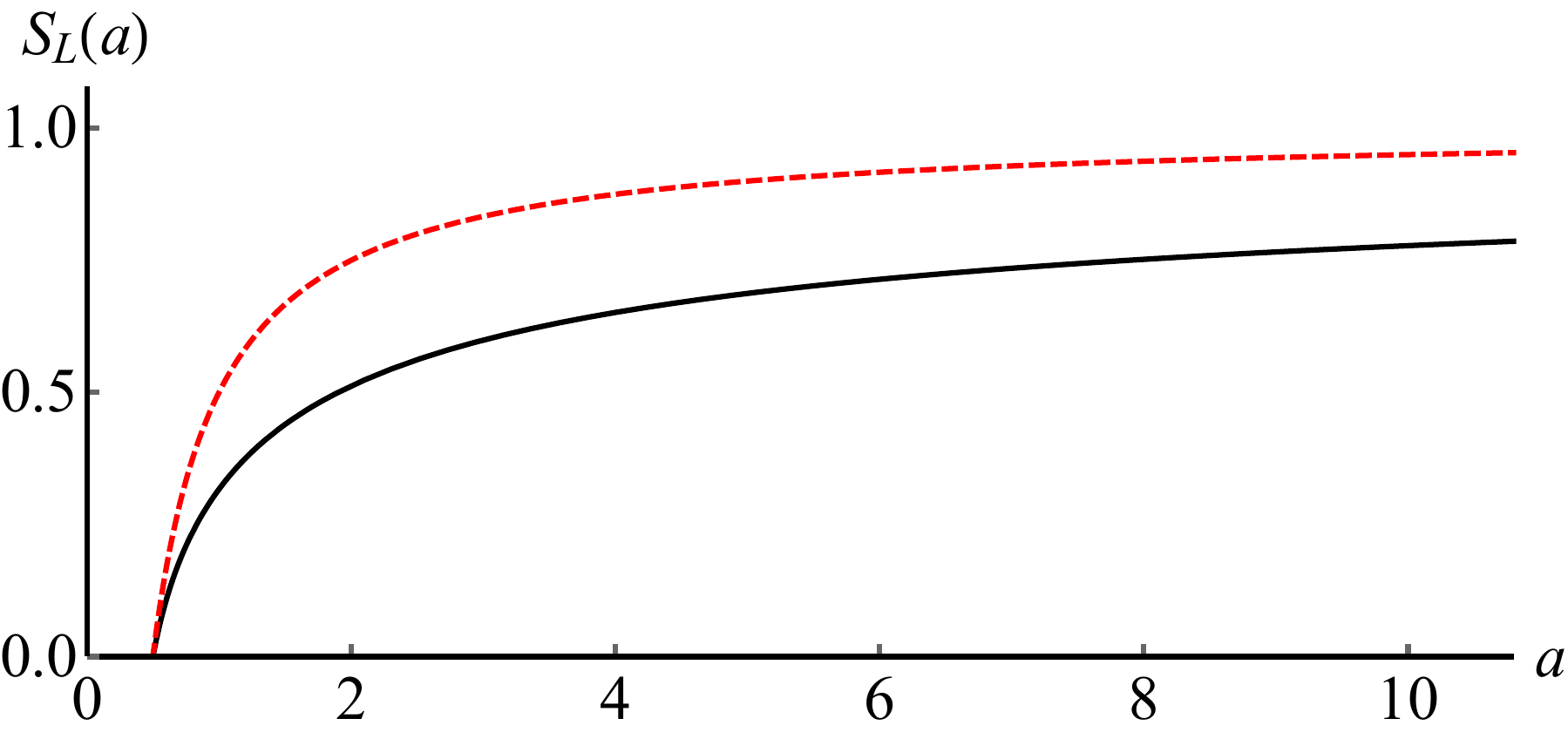}
  \caption{}
  \label{fig:S_L_cascaded}
\end{subfigure}
\captionsetup{width=1\textwidth,justification=centerlast}
\caption{Stabilizable entanglement for two cascaded oscillators coupled to the vacuum. The logarithmic negativity $E_{\mathcal{N}}(a,c_+)$ (a) and the linear entropy $S_{L}(a,c_+)$ (b) are plotted as functions of $a$, with $c_+=c_{+,\textrm{max}}(a)$ and $c_+=c_{+,\textrm{mid}}(a)$ -- solid (black) and dashed line (red), respectively. Reflecting the global character of the dissipator, we find that the amount of entanglement for the states characterized by $c_+(a)=c_{+,\textrm{max}}(a)$ grows unbounded as~$a\to\infty$. States characterized by maximum logarithmic negativity are close to the maximally mixed state.}
\label{fig:E_N_and_S_L_cascaded}
\end{figure*} 

In this scheme, we have a single Lindblad operator
\begin{equation} \label{dis_cascaded}
\begin{split}
\hat{L} & \coloneqq \sqrt{\kappa}\left(\hat{a}_1+\hat{a}_2\right),
\end{split}
\end{equation}
where $\kappa>0$ is a parameter responsible for the strength of the dissipation. The model is similar to the one discussed in Section \ref{sec:two_modes_with_local_damping} with $\gamma=x_0=1$, in the sense that the interaction with the environment results in the loss of excitations in the modes. However, while that dissipator consisted of two {channels}, each of which decreased the number of excitations in one of the modes in a deterministic fashion, here, the dissipator consists of only one {channel}, whose action on the state creates a superposition of two states, each with an excitation lost in one of the modes. Interestingly, the steady state of the system in the absence of a Hamiltonian, given by $b=a$, $c_\pm=1/2-a$, is {separable}.

{In the model at hand, our main objective is to demonstrate that, unlike in the previous models, it is possible to achieve infinite logarithmic negativity. To this end, it is sufficient to consider states characterized by $a=b$. The definition (\ref{dis_cascaded}) then gives rise to the following stabilizability conditions (\ref{stabilizability}):}
\begin{equation} \label{eq:cond_2}
\begin{split}
0=\frac{g_{1}(\vec{z})}{\kappa} = & \:
	4 a^2 + 2a\left(2 c_+ + 2c_- -1\right)
	+ 4 c_+ c_- - c_+ - c_-,\\
0=\frac{g_{2}(\vec{z})}{\kappa} = & \: -(a+c_+)(a+c_-)\frac{g_{1}(\vec{z})}{\kappa}.
\end{split}
\end{equation}
Note that the value of $\kappa$ is irrelevant for stabilizability. This is an immediate consequence of the fact that the conditions (\ref{stabilizability}) are linear in $C^\dag C$. Physically, it corresponds to the fact that the overall dissipation strength merely affects the transition time to the steady state, not the steady state itself.

Clearly, the two equalities (\ref{eq:cond_2}) are valid only if $g_{1}(\vec{z})=0$. Solving for $c_-$, we obtain
\begin{equation}
\begin{split}
c_-(a,c_+)=&\:-a+(a+c_+)/(4 a+4 c_+-1).
\end{split}
\end{equation}
The system (\ref{constraints_general_log_neg}, \ref{constraints_general_h}) is then solved if and only if $a \geqslant 1/2$ and
\begin{equation} \label{c_+(a)}
\begin{split}
c_{+,\min}(a)<c_+\leqslant c_{+,\max}(a),
\end{split}
\end{equation}
where
\begin{equation}
\begin{split}
c_{+,\min}(a) \coloneqq &\: \sqrt{(a-1) a+1/2}-1/2, \\
c_{+,\max}(a) \coloneqq &\: 
	\frac{a-1/2+ \sqrt{2a (2 a-1) (4 a-1) (4 a+1)}}{8 a-1}.
\end{split}
\end{equation}
The logarithmic negativity takes the form
\begin{equation}
\begin{split}
E_\mathcal{N}\left(a,c_{+}\right) \coloneqq -\log
	\left(2\sqrt{\frac{a^2-c_+^2}{4 a+4 c_+-1}}\right).
\end{split}
\end{equation}
Notably, it is a monotonically non-decreasing function of $a$, with the rate of growth proportional to how close $c_+$ is to $c_{+,\max}(a)$. In particular, $E_\mathcal{N}\left[a,c_{+,\min}(a)\right] = 0$ and
\begin{equation}
\begin{split}
\lim\limits_{a\to\infty}E_\mathcal{N}\left[a,c_{+,\max}(a)\right] = \infty.
\end{split}
\end{equation}
In other words, in the limit $a\to\infty$, it is possible to stabilize states characterized by arbitrarily high entanglement. An exemplary Hamiltonian stabilizing such states is given by
\begin{equation}
\begin{split}
\hat{H}_{\textnormal{cas}}=
    (-i\hbar\omega/2)\left[\big(\hat{a}_1+\hat{a}_2\big)^2-
    \big(\hat{a}_1^\dag+\hat{a}_2^\dag\big)^2\right],
\end{split}
\end{equation} 
where $\omega$ is an arbitrary positive constant. One easily checks that, in the limit $a\to\infty$, the functions $c_{+,\max}(a)$, $c_{-,\max}(a)$ practically coincide. By virtue of eq. (\ref{squeezed_parametrization}), we can thus interpret this limit as infinite two-mode squeezing.

For the sake of completeness, we also analyze the case of $c_+ = -c_- \equiv c$, as in the case of local damping. As it turns out, such an assumption leads to $a=b$, effectively reducing it to a special case of above model, with $c_{+,\textrm{mid}}(a)\coloneqq\sqrt{a(a-1/2)}$. In~the limit of infinite squeezing this yields
\begin{equation}
\begin{split}
\lim\limits_{a\to\infty}E_\mathcal{N}\left[a,c_{+,\textrm{mid}}(a)\right] = \log{2},
\end{split}
\end{equation}
a reiteration of the result (\ref{limit_damped}) for local damping.

In all cases, the state is at least partially mixed. The linear entropy (\ref{S_L_simple}) is equal to
\begin{equation}
\begin{split}
S_L(a,c_+)=1-\frac{4 a+4 c_+-1}
	{4\left(a+c_+\right)\sqrt{\left(a-c_+\right)o(a,c_+)}},
\end{split}
\end{equation}
where $o(a,c_+)\coloneqq c_+(8 a-1)+a(8 a-3)$. As is evident from (\ref{c_+(a)}), $c_+$ is at least linear in $a$. The negative term thus eventually decays to $0$ as $a$ grows. The Tsallis and R\'{e}nyi entropies yield similar results.

Figure \ref{fig:E_N_cascaded} shows a comparison of the logarithmic negativities $E_\mathcal{N}(a,c_+)$ with $c_{+,\textrm{max}}(a)$ and $c_{+,\textrm{mid}}(a)$ as input. We find that the former grows indefinitely, while the latter rapidly reaches its maximal value, $E_\mathcal{N,\max}=\log{2}$. In Figure \ref{fig:S_L_cascaded}, we provide an analogous comparison for the corresponding entropies.

The presence of additional, local noise can be taken into account in a similar way as in the case discussed in the previous subsection.

\section{Concluding remarks}
\label{sec:Summary}
We studied the stabilizability of entangled two-mode Gaussian states in three physically motivated dissipative scenarios. Based on a Hamiltonian-independent treatment, we find explicit parametrizations of the stabilizable states in all three models, allowing us to quantify their entanglement and mixedness.

In the case of two modes with local damping, where the dissipator acts adversarial to entanglement, we provide strong evidence that the logarithmic negativity does not exceed $\log 2$ for all stabilizable states. Perhaps counterintuitively, we obtain a similar result in the case of nonlocal dissipators engineered to preserve squeezed thermal states, where an analogous upper bound is derived. For this class of dissipators, we also showed that there exist stabilizable states with entanglement higher than in the case of {dedicated} two-mode squeezed states. In the case of cascaded oscillators coupled to the vacuum, we find that arbitrarily high entanglement can be stabilized. Generally, we observe that, regardless of the model at hand, the stabilizable states which maximize entanglement are close to maximally mixed, indicating an asymptotic tradeoff relation between entanglement and purity among stabilizable states. 

Our findings suggest the following directions for future research. Firstly, we focused here on two-mode Gaussian states. It would be interesting to see how the analysis can be extended to other types of systems; for instance, $N$-mode Gaussian states or non-Gaussian states. {In the former case, Theorem \ref{theorem_covariance} significantly reduces the number of stabilizability conditions. However, due to the lack of a standard form for $N>2$, covariance matrices depend on large numbers of parameters, rendering analytical treatments challenging. In the latter case, a challenge may lie in finding viable parametrizations for families of potentially stabilizable states. Irrespectively, let us point out that, in a more precise sense, our work addresses the stabilizability of covariance matrices, which can also be attributed to non-Gaussian states. For Gaussian states, the covariance matrices comprise the complete state information, including their entanglement properties. When applied to non-Gaussian states, our results still hold in a similar way, with some conclusions weakened (e.g., entanglement criteria based on the covariance matrix are only necessary for non-Gaussian states \cite{Vogel05}).}

Secondly, our conjecture regarding the (absence of) purity of {maximally} entangled stabilizable states relies on specific models of environment. Is it possible to make this statement more rigorous, e.g. by proving it for arbitrary dissipators/systems? Moreover, the theory of stablizability itself may be developed further. For example, the known conditions for stabilizability \cite{stabilizability_geometric,stabilizability_cv_systems} are necessary but not sufficient for all quantum states. Necessary and sufficient conditions, on the other hand, would allow us to draw more stringent conclusions. {Finally, there exist practical scenarios in which the dissipator is at least partially controlled. It may be beneficial to generalize the theory of stabilizability to such scenarios, e.g. by splitting the dissipator into a controllable and non-controllable part, and formulating new conditions for stabilizability with respect to the latter.}

\paragraph*{Acknowledgements.}
Tomasz Linowski and {\L}ukasz Rudnicki would like to acknowledge support by the Foundation for Polish Science (IRAP project, ICTQT, contract no. 2018/MAB/5, co-financed by EU within Smart Growth Operational Programme).

\appendix
\section{Discussion of the general problem of local damping}
\label{sec:explicit_solution_to_the_general_damping_problems}
Solving the stabilizability conditions (\ref{damped_cond}) as described at the beginning of Section \ref{sec:two_modes_with_local_damping}, we recover the solution~as
\begin{equation} \label{eq:cpcm_explicit_damping}
\begin{split}
    c_{+}&=\frac{\gamma _1 \left(-4 a^2 x_0^2+x_0^4 (a+b)+a\right)+b \gamma _2 \left(1-4 b x_0^2\right)}{4 c_- \left(\gamma _1+\gamma _2\right) x_0^2},\\
    (c_-^2)_\pm&=\frac{-B\pm\sqrt{B^2-4AC}}{2A},
\end{split}
\end{equation}
where
\begin{equation*}
\begin{split}
    A=\frac{1}{2} \gamma _1 \left(x_0^2 (a+b)-4 a b\right)-2 a b \gamma _2,
\end{split}
\end{equation*}
\begin{equation*}
\begin{split}
    B=\frac{\gamma _1^2 D+2\gamma _2 \gamma _1 b E+\gamma _2^2 b F}{8 \left(\gamma _1+\gamma _2\right) x_0^4},
\end{split}
\end{equation*}
\begin{equation*}
\begin{split}
    C=\frac{G^2H}{32 \left(\gamma _1+\gamma _2\right){}^2 x_0^6},
\end{split}
\end{equation*}
\begin{equation*}
\begin{split}
    D=\:2 a x_0^4 \left(8 a^3+8 a b^2+a+b\right)-4 a x_0^2 \left(2 a^2+b^2\right)\\
        +a^2+x_0^8 (a+b)^2-4 a x_0^6 (a+b) (2 a+b),
\end{split}
\end{equation*}
\begin{equation*}
\begin{split}
    E=\:x_0^4 \left(32 a^2 b+a+b\right)-2 x_0^6 (a+b) (a+2 b)\\
        -6 a x_0^2 (a+b)+a,
\end{split}
\end{equation*}
\begin{equation*}
\begin{split}
    F=\:\left(4 b x_0^2-1\right) \left[4 x_0^2 \left(a^2+b^2\right)-b\right],
\end{split}
\end{equation*}
\begin{equation*}
\begin{split}
    G=\:\gamma _1 \left(-4 a^2 x_0^2+x_0^4 (a+b)+a\right)+b \gamma _2 \left(1-4 b x_0^2\right),
\end{split}
\end{equation*}
\begin{equation*}
\begin{split}
    H=\:b \gamma _1 \left(4 a x_0^2-1\right)+a \gamma _2 \left(4 b x_0^2-1\right).
\end{split}
\end{equation*}
The constraints (\ref{constraints_general_log_neg}, \ref{constraints_general_h}), on the other hand, read explictly:
\begin{equation*}
\begin{split}
    0&> -4 \left(a^2+b^2-2 c_- c_+\right)+16 \left(a b-c_-^2\right) \left(a b-c_+^2\right)+1,\\
    0&\geqslant 4 \left(a^2+b^2+2 c_- c_+\right)-16 \left(a b-c_-^2\right) \left(a b-c_+^2\right)-1,\\
    0&\geqslant 2-4 \left(a^2+b^2+2 c_- c_+\right).
\end{split}
\end{equation*}
Using the formulas (\ref{eq:cpcm_explicit_damping}), we obtain a set of three inequalities for two independent variables $a$, and $b$, and three parameters $\gamma_1$, $\gamma_2$, and $x_0$ (two if $\gamma = \gamma_2/\gamma_1$ is used). Unless specific values are assigned to these parameters, the solution has to be attempted numerically.
\vspace{0cm}

\bibliography{stabilizability}{}
\bibliographystyle{custombib}

\end{document}